  \def\BibTeX{{\rm B\kern-.05em{\sc i\kern-.025em b}\kern-.08em
     T\kern-.1667em\lower.7ex\hbox{E}\kern-.125emX}}
\theoremstyle{definition}
\newtheorem{assumption}{Assumption}
\newtheorem{definition}{Definition}
\newtheorem{remark}{Remark}
\newtheorem{problem}{Problem}
\theoremstyle{plain}
\newtheorem{theorem}{Theorem}
\newtheorem{lemma}{Lemma}
\let\NAT@parse\undefined
\title{A Framework for Effective AI Recommendations in Cyber-Physical-Human Systems}
\author{Aditya Dave, \textit{Member, IEEE}, Heeseung Bang, \textit{Student Member, IEEE},\\ Andreas A. Malikopoulos, \textit{Senior Member, IEEE}
    \thanks{This research was supported by NSF under Grants CNS-2149520 and CMMI-2219761.}
    %\thanks{$^{1}$Systems Engineering, Cornell University, Ithaca, NY 14850, USA.} %{\tt\small email: \{heeseung,vietale,andreas\}@udel.edu}}
    \thanks{The authors are with The School of Civil and Environmental Engineering, Cornell University, Ithaca, NY 14850, USA. {\tt\small email: \{a.dave,hb489,amaliko\}@cornell.edu}}
	% \thanks{The authors are with the Department of Mechanical Engineering, University of Delaware, Newark, DE 19716, USA. (emails: \tt\small{heeseung@udel.edu}; \tt\small{andreas@udel.edu}.)}
}
\date{March 2024}
\begin{document}

\maketitle
\thispagestyle{empty}
\begin{abstract}
Many cyber-physical-human systems (CPHS) involve a human decision-maker who may receive recommendations from an artificial intelligence (AI) platform while holding the ultimate responsibility of making decisions. In such CPHS applications, the human decision-maker may depart from an optimal recommended decision and instead implement a different one for various reasons. In this letter, we develop a rigorous framework to overcome this challenge.
In our framework, we consider that humans may deviate from AI recommendations as they perceive and interpret the system's state in a different way than the AI platform. We establish the structural properties of optimal recommendation strategies and develop an approximate human model (AHM) used by the AI. We provide theoretical bounds on the optimality gap that arises from an AHM and illustrate the efficacy of our results in a numerical example.
%In turn, humans must implement decisions in the physical world. However, the human's interpretation of the situation may differ from that of the AI and consequently, the human may deviate from the recommended actions. This behavior negatively impacting the system's performance and renders the recommendation problem challenging.
\end{abstract}

\begin{IEEEkeywords}
Cyber-Physical Human Systems, Human-AI Interaction, Human Model, Recommender Systems.
\vspace{-11pt}
\end{IEEEkeywords}

\section{Introduction}
In several cyber-physical-human systems (CPHS), e.g., aircraft co-pilot \cite{uzun2023enhancing}, autonomous driving \cite{Nishanth2023AISmerging},  social media \cite{Dave2020SocialMedia},  a human decision-maker may receive recommendations from an artificial intelligence (AI) platform while holding the ultimate responsibility of making decisions. For example, consider a traffic environment \cite{Bang2023mem} where a human driver receives a recommendation for following a particular route to avoid congestion by a central traffic management system running by an AI platform. In such CPHS applications, the human decision-maker may depart from an optimal recommended decision and instead implement a different one for various reasons \cite{green2019principles}. For example, the human decision-maker may (1) perceive and interpret the system's observations in a different way than the AI platform; (2) have different objectives or restrictions than those designated for the AI; or (3) have more confidence in their inherent decision-making ability or be averse to implementing the suggestions of an algorithm.
Thus, CPHS pose additional challenges \cite{Malikopoulos2022a} to their control because of the influence of humans within the decision-making loop \cite{samad2023human}.

To better understand this phenomenon, there has been recent interest in learning \cite{carroll2019utility} and empirically developing models for human behavior \cite{annaswamy2023human} during collaborations with AI platforms.
It has been established that humans are likely to adhere to recommendations that are easy to interpret and reaffirm their preconceived opinions \cite{dietvorst2018overcoming}.
Furthermore, there is evidence that humans may mistrust AI suggestions, disregard recommendations that can cause discomfort \cite{sun2022predicting}, or misinterpret recommendations \cite{balakrishnan2022improving}, worsening the overall system performance \cite{sabate2003adherence}.
In response to these findings, many research efforts have focused on developing approaches to increase human trust towards AI platforms \cite{glikson2020human} and increase human adoption of AI recommendations \cite{dietvorst2015algorithm}.
%Collectively, the evidence highlights the challenges to modeling human behavior under AI suggestions and predicting subsequent human actions. One commonly explored research direction to mitigate these challenges is to design approaches that increase human trust on AI platforms \cite{glikson2020human} and increase the adoption of AI recommendations \cite{dietvorst2015algorithm, bastani2021improving}.
However, there remains a need to design principled approaches that an AI platform can use to account for human behavior when generating recommendations.

The adherence-aware Markov decision process is one approach to formalize these human-AI interactions by limiting human behavior to two choices: they may either accept or reject AI suggestions at each instance of time, as dictated by their adherence probability \cite{grand2022best}. In this context, optimal recommendations can be derived for humans with unknown adherence probabilities in unknown environments using Q-learning \cite{faros2023adherence}. Furthermore, this framework has motivated reinforcement learning approaches that explicitly consider whether an AI platform should abstain from recommending decisions \cite{chen2023learning}.
While promising, each of these results relies upon the specific model of human behavior and assumes a system with a perfectly observed Markovian state.
These assumptions will not hold for most CPHS applications. Consequently, there is a need for more general approaches to this problem. 
% we must consider more general frameworks that model human behavior and generate optimal recommendations in human-AI teams.

In this letter, we present a general framework for effective AI recommendations to humans in partially observed CPHS.
We impose minimal assumptions on human behavior and develop our theory to support both empirical modeling and learning from human interactions.  
Our contributions in this letter are (1) a framework for AI recommendations in CPHS and a derivation of the structure of optimal recommendation strategies (Theorem \ref{theorem:optimal_recommendation}), and (2) the introduction of an ``approximate human model" (Definition \ref{definition:ahm}) that yields approximately optimal recommendation strategies with guaranteed performance bounds (Theorem \ref{theorem:approximation_garuantees}). 
% A salient feature of the approximate human model is that it can be designed empirically and validated using observations, or it can be learned purely from observations. 
We also illustrate the efficacy of our framework in a numerical example.

% with a complex internal model within a system with im  partial adherence into reinforcement learning
% % \cite{hung2016q}, \cite{brunke2022safe}
% by proposing an ``adherence-aware $Q$-learning algorithm."
% Specifically, we consider an MDP comprising an HDM that implements actions to influence the evolution of an unknown environment. 
% The HDM's strategy for generating actions combines an algorithmic recommendation and a baseline strategy.
% The adherence level of the HDM is not known to the algorithm a priori. 
% The algorithm and the HDM share the same objective, and thus, our formulation aims to compute in real time the best-recommended actions to maximize an expected total discounted reward.  

% Our main contributions in this letter are the (1) introduction of an adherence-aware $Q$-learning algorithm (Algorithm \eqref{eq_adh_q}) to compute an optimal control law, (2) convergence for the algorithm to the optimal $Q$-function (Theorem \ref{theorem1}), and (3) establishment of the advantages of our approach against baseline policies and classical $Q$-learning using numerical examples (Section \ref{sec:example}).

The remainder of the letter proceeds as follows. In Section \ref{section:problem_formulation}, we present our formulation. In Section \ref{section:recommendation_framework}, we analyze the structure of optimal recommendations, propose an approximate human model, and derive approximation bounds. In Section \ref{sec:example}, we present a numerical example, and in Section \ref{sec:conclusion}, we draw concluding remarks.

\vspace{-6pt}

\section{Problem Formulation} \label{section:problem_formulation}

We consider an AI platform that recommends decisions to a human in a CPHS. The human is responsible for implementing actions that influence the system's evolution. In this context, the human implements a decision by incorporating the platform's recommendations with an instinctive understanding of the situation, as illustrated in Fig. \ref{fig:problem_setup}.
Thus, the AI platform must account for the possibility that a human may re-interpret or disregard the recommended actions.
The CPHS has a finite state space $\mathcal{X}$, and the human selects actions from a finite feasible set $\mathcal{U}$. The system evolves over discrete time steps until a finite horizon $T \in \mathbb{N}$. At each time $t \in \mathcal{T} = \{0,1,\dots,T\}$, the state of the system is denoted by the random variable $X_t \in \mathcal{X}$ and the action implemented by the human is denoted by the random variable $U_t^{\text{h}} \in \mathcal{U}$.
Starting at the initial state $X_0 \in \mathcal{X}$, the evolution of the system at each $t \in \mathcal{T}$ is described by $X_{t+1}=f(X_t,U_t^{\text{h}},W_t)$, 
where $W_t$ is a random variable that corresponds to the external, uncontrollable disturbance and takes values in a finite set $\mathcal{W}$. The disturbances form a sequence of independent random variables
$\{W_t: t \in \mathcal{T}\}$ that are also independent of the initial state $X_0$.
At each $t \in \mathcal{T}$, the system output is denoted by the random variable $Y_t$ taking values in a finite set $\mathcal{Y}$. The output is described by the observation equation $Y_t=o(X_t,Z_t)$,
where $Z_t$ is a random variable corresponding to an uncontrolled disturbance within the observation process and takes values in a finite set $\mathcal{Z}$. The sequence $\{Z_t: t \in \mathcal{T}\}$ consists of independent random variables that are also independent of $X_0$ and $\{W_t: t \in \mathcal{T}\}$. 

The system output $Y_t$ is received by both the human and the AI platform at each $t \in \mathcal{T}$. 
The platform generates a recommendation for the human with the goal of guiding the human's eventual action. Thus, this recommendation is a random variable $U_t^{\text{ai}}$ that takes values in the human's space of feasible actions $\mathcal{U}$.
At each $t \in \mathcal{T}$, the platform provides $U_t^{\text{ai}}$ based on the history $H_t = (Y_{0:t}, U_{0:t-1}^{\text{h}}, U_{0:t-1}^{\text{ai}}) \in \mathcal{H}_t$ %, taking values in the finite set $ \mathcal{H}_t = \mathcal{Y}^{t} \times \mathcal{U}^{t-1} \times \mathcal{U}^{t-1}$, 
and the recommendation strategy $\boldsymbol{g}^{\text{ai}} = (g_0^{\text{ai}}, \dots, g_T^{\text{ai}})$, where each recommendation law is the mapping $g_t^{\text{ai}}: \mathcal{H}_t \to \mathcal{U}$. Thus, the recommendation is $U_t^{\text{ai}} = g_t^{\text{ai}}(H_t)$ for all $t \in \mathcal{T}$.  

At each $t \in \mathcal{T}$, the human receives the recommendation before deciding which action to implement.
This decision is also affected by their own internal state, denoted by the random variable $S_t$ taking values in a finite space $\mathcal{S}$.
An internal state represents a combination of the human's interpretation of the system state, amenability towards AI suggestions, self-confidence, or a variety of other factors affecting the human's choices.
Starting at $S_0 \in \mathcal{S}$, the internal state evolves for all $t \in \mathcal{T}$ as $S_{t+1} = f^{\text{h}}(S_{t}, U_t^{\text{ai}}, Y_{t+1}, N_t)$, 
where $N_t$ is an uncontrolled disturbance that takes values in a finite set $\mathcal{N}$ and represents stochastic uncertainties in the evolution of the human's internal state. The initial internal state $S_0$ is independent of $X_0$ and the sequences $\{Z_t, W_t: t \in \mathcal{T}\}$.
Then, the human uses a control law $g^{\text{h}} : \mathcal{S} \times \mathcal{U} \to \mathcal{U}$ to implement the action
$U_t^{\text{h}} = g^{\text{h}}(S_t, U_t^{\text{ai}})$ at each $t \in \mathcal{T}$.
Subsequently, both the human and the AI platform receive shared feedback from the system, generated using the reward function $r: \mathcal{X} \times \mathcal{U} \to [r^{\min}, r^{\max}]$, where $r^{\min}, r^{\max} \in \mathbb{R}$. We denote this feedback by the random variable $R_t = r(X_t, U_t^{\text{h}}) = r\big(X_t,g^{\text{h}}(S_t, U_t^{\text{ai}})\big)$.
The objective of the AI platform is to maximize the expected total discounted reward:
    \begin{align} \label{eq:performance_criterion}
       J(\boldsymbol{g}^{\text{ai}}) =  \mathsf{E}^{\boldsymbol{g}^{\text{ai}}}\Bigg[\sum_{t=0}^{T} \gamma^{t} {\cdot} r\big(X_t,g^{\text{h}}(S_t, U_t^{\text{ai}})\big) \Bigg],
    \end{align}  
where $\mathsf{E}^{\boldsymbol{g}^{\text{ai}}}[\cdot]$ is the expectation with respect to the joint distribution imposed by strategy $\boldsymbol{g}^{\text{ai}}$, when human actions use the control law $g^{\text{h}}$, and $\gamma \in (0,1)$ is a discount factor. %Note that the AI agent cannot influence the human's control law and thus, the objective evaluates the agent's strategy for a given human control law.

\begin{figure}[t!]
    \centering
    \includegraphics[width = \linewidth]{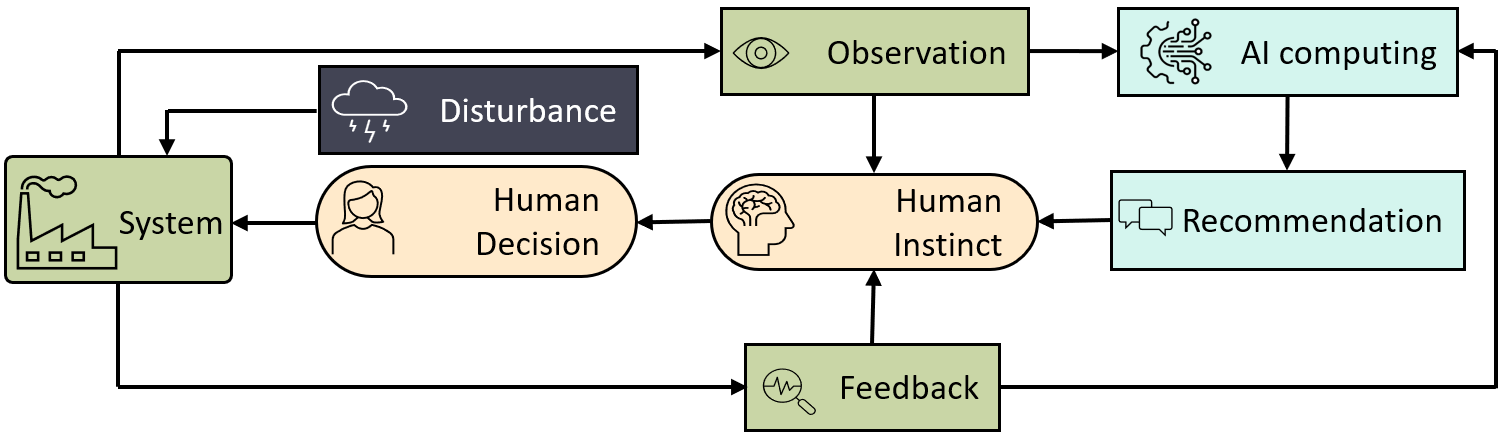}
    \caption{Control loop of the recommendation problem.}
    \label{fig:problem_setup}
    \vspace{-14pt}
\end{figure}

\begin{problem} \label{problem_1}
    The AI platform seeks an optimal recommendation strategy $\boldsymbol{g}^{*\text{ai}}$, such that $J(\boldsymbol{g}^{*\text{ai}}) \geq J(\boldsymbol{g}^{\text{ai}})$, given the sets $\{\mathcal{X},\mathcal{W},\mathcal{U}, \mathcal{Y}, \mathcal{Z}\}$ and functions $\{f, o\}$.
\end{problem}

An optimal strategy $\boldsymbol{g}^{*\text{ai}}$ exists because all variables are finite valued, but it may not be computable without knowledge of $\mathcal{S}$, $g^{\text{h}}$, and $f^{\text{h}}$. We impose the following assumptions.

\begin{assumption} \label{assumption_1}
    The human and the AI platform receive the same observation $Y_t$ at any $t \in \mathcal{T}$.
\end{assumption}

Assumption \ref{assumption_1} implies that the human cannot have more information than the AI platform at any $t$. %This is because the human may ``forget" some parts of the observation when updating the internal state $S_t$, but the human may not observe information unavailable to the AI platform.
In most CPHS applications, this assumption holds due to the AI platform's ability to access and assimilate large quantities of data.
% This assumption allows us to avoid explicitly modeling complications that arise when an AI agent must detect the presence of a more informed or more skilled human before deciding whether to recommend or to withhold judgement.  %and avoid making recommendations to cloud that human's judgement.
%While our approach does not explicitly require this assumption, we impose it because we do not explicitly consider situations where an AI agent must infer the presence of a more informed or skilled human, and thus, simply defer to that human's judgement.

\begin{assumption} \label{assumption_2}
    The action of the human $U_t^{\text{h}}$ and the reward $R_t$ are perfectly observed by the AI platform at each $t \in \mathcal{T}$.
\end{assumption}

Assumption \ref{assumption_2} implies that the human and the AI platform receive consistent rewards.
This assumption is required for the platform to anticipate human behavior. We anticipate the need for additional analysis in applications where humans may interpret rewards differently to a platform, e.g., economic systems \cite{annaswamy2023human}.

\vspace{-6pt}

\section{Recommendation Framework} \label{section:recommendation_framework}

In this section, we develop our theoretical framework to compute optimal recommendations. In Subsection \ref{subsection:known_internal_dynamics}, we analyze an AI platform with access to the true model for a human's behavior. This analysis yields a structural form for optimal AI recommendations. Building upon this structure and taking inspiration from recent work in partially observed reinforcement learning \cite{subramanian2022approximate, Dave2023infhorizon}, we define the notion of an approximate human model (AHM) in Subsection \ref{subsection:approximate_human}. We show that an AI platform can use an AHM to compute recommendations with performance guarantees. Finally, we propose an approach to construct an AHM in Subsection \ref{subsection:learning_algorithm}.

\vspace{-11pt}

\subsection{Optimal recommendation strategies} \label{subsection:known_internal_dynamics}

We start our exposition by considering that the AI platform knows a priori an exact human model consisting of the set of internal states $\mathcal{S}$, an initial distribution on $S_0$, the function $f^{\text{h}}(\cdot)$, and the human's control law $g^{\text{h}}(\cdot)$. However, the platform does not observe $S_t$ at any $t \in \mathcal{T}$. Next, we prove that such a system constitutes a partially observable Markov decision process (named the human-AI POMDP) for the platform.

\begin{lemma} \label{lemma:POMDP}
    Given a human model, Problem \ref{problem_1} is equivalent to computing the optimal strategy in a POMDP with state $(X_t, S_t) \in \mathcal{X} \times \mathcal{S}$, input $U_t^{\text{\emph{ai}}} \in \mathcal{U}$, observation $(Y_t, U_{t-1}^{\text{\emph{h}}}) \in \mathcal{Y} \times \mathcal{U}$, and reward $R_t \in [r^{\min}, r^{\max}]$ for all $t \in \mathcal{T}$.
\end{lemma}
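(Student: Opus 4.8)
The plan is to make the correspondence in the statement completely explicit: I will name the POMDP's state, action, observation, reward, transition kernel and observation kernel, verify that this tuple satisfies the defining conditional–independence properties of a POMDP, and check that the induced control problem has exactly the same admissible strategies and the same objective as Problem~\ref{problem_1}. Concretely, take the augmented state $\Xi_t := (X_t, S_t) \in \mathcal{X}\times\mathcal{S}$ with initial distribution the product of the given prior on $X_0$ and the given prior on $S_0$ (these are independent by hypothesis), the control $U_t^{\text{ai}} \in \mathcal{U}$, the per-step observation $O_t := (Y_t, U_{t-1}^{\text{h}}) \in \mathcal{Y}\times\mathcal{U}$ (with $U_{-1}^{\text{h}}$ a fixed placeholder symbol, so that $O_0$ carries the initial observation $Y_0$), and the reward $R_t = r\big(X_t, g^{\text{h}}(S_t, U_t^{\text{ai}})\big)$, which is a deterministic function of $(\Xi_t, U_t^{\text{ai}})$; hence the objective in \eqref{eq:performance_criterion} is literally the expected discounted return of this POMDP. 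Note that Assumptions~\ref{assumption_1}--\ref{assumption_2} are exactly what guarantees that $O_t$ and $R_t$ are components of the AI's history $H_t$.

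First I would establish the controlled Markov property of $\{\Xi_t\}$. Substituting $U_t^{\text{h}} = g^{\text{h}}(S_t, U_t^{\text{ai}})$ into the dynamics gives $X_{t+1} = f\big(X_t, g^{\text{h}}(S_t, U_t^{\text{ai}}), W_t\big)$, $Y_{t+1} = o(X_{t+1}, Z_{t+1})$, and $S_{t+1} = f^{\text{h}}(S_t, U_t^{\text{ai}}, Y_{t+1}, N_t)$, so the pair $(\Xi_{t+1}, O_{t+1})$ is a fixed deterministic function of $(\Xi_t, U_t^{\text{ai}}, W_t, Z_{t+1}, N_t)$. An induction on $t$ shows that $H_t$, and therefore any admissible recommendation $U^{\text{ai}}_{0:t}$, is a measurable function of $(X_0, S_0, W_{0:t-1}, Z_{0:t}, N_{0:t-1})$; combined with the postulated mutual independence of the sequences $\{W_t\}$, $\{Z_t\}$, $\{N_t\}$ and their independence from $(X_0, S_0)$, this yields that the fresh disturbances $(W_t, Z_{t+1}, N_t)$ are independent of $(H_t, \Xi_{0:t}, U^{\text{ai}}_{0:t})$. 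Consequently the conditional law of $(\Xi_{t+1}, O_{t+1})$ given the entire past depends only on $(\Xi_t, U_t^{\text{ai}})$, which defines the (time-varying) transition and observation kernels of the POMDP.

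Next I would match the information structures. By construction $H_t = (Y_{0:t}, U^{\text{h}}_{0:t-1}, U^{\text{ai}}_{0:t-1})$ and $(O_{0:t}, U^{\text{ai}}_{0:t-1})$ generate the same $\sigma$-algebra, so the AI's history coincides with the observable history of the POMDP. Hence admissible strategies $\boldsymbol{g}^{\text{ai}}$ are in one-to-one correspondence with history-dependent POMDP policies, each such pair induces the same joint law over all the primitive random variables, and therefore $J(\boldsymbol{g}^{\text{ai}})$ equals the value of the corresponding policy in the POMDP; in particular the maximizers coincide, which is the asserted equivalence.

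The step I expect to be the main obstacle is the rigorous verification of the controlled Markov property in the presence of the ``intra-step'' coupling created by $Y_{t+1}$ appearing inside $f^{\text{h}}$: one must treat $\Xi_{t+1}$ and $O_{t+1}$ as generated \emph{jointly} from $(\Xi_t, U_t^{\text{ai}})$ and the new noises, rather than asserting that $O_{t+1}$ is conditionally independent of $\Xi_{t+1}$ given $(\Xi_t, U_t^{\text{ai}})$ — this is still a legitimate POMDP (the observation is produced together with the transition), but the bookkeeping of which disturbance drives which equation, together with the induction that decouples the fresh noises from the past, is where the care lies; the remaining steps are routine rewriting of the definitions.
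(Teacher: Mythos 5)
Your proposal is correct, and it reaches the paper's conclusion by a genuinely different (though equally standard) route. The paper verifies the two defining properties of the POMDP state directly in probabilistic terms: it notes $R_t = r\big(X_t, g^{\text{h}}(S_t,U_t^{\text{ai}})\big)$ and then uses the law of total probability and Bayes' rule to factor the one-step conditional law $\mathsf{P}\big(x_{t+1},s_{t+1},y_{t+1},u_t^{\text{h}}\,\big|\,x_{0:t},s_{0:t},y_{0:t},u_{0:t-1}^{\text{h}},u_{0:t}^{\text{ai}}\big)$ into an indicator for $u_t^{\text{h}}$, the internal-state kernel, the observation kernel, and the system kernel, concluding that it depends only on $(x_t,s_t,u_t^{\text{ai}})$; the matching of objectives and strategies is then asserted in a single line. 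You instead prove the same controlled Markov property by the functional, primitive-random-variable method: write $(\Xi_{t+1},O_{t+1})$ as a deterministic function of $(\Xi_t,U_t^{\text{ai}},W_t,Z_{t+1},N_t)$, show by induction that $H_t$, $\Xi_{0:t}$ and $U_{0:t}^{\text{ai}}$ are measurable functions of the earlier primitives, and invoke independence of the fresh noises. You also make explicit two points the paper leaves implicit and which are worth having on record: first, that the observation must be treated as generated jointly with the transition (since $Y_{t+1}$ enters $f^{\text{h}}$ and $U_t^{\text{h}}$ depends on $S_t$ rather than $S_{t+1}$), exactly as the paper's joint factorization tacitly does; second, that $(O_{0:t},U^{\text{ai}}_{0:t-1})$ generates the same $\sigma$-algebra as $H_t$, so admissible recommendation strategies correspond one-to-one with history-dependent POMDP policies and the values coincide, which is the substance of the paper's closing sentence. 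The only caveat is minor: the paper never explicitly states that $\{N_t\}$ is a sequence of independent random variables independent of $X_0$, $S_0$, $\{W_t\}$ and $\{Z_t\}$, an assumption your independence induction (and, in fact, the paper's own use of the kernel $\mathsf{P}^{f^{\text{h}}}(s_{t+1}\,|\,s_t,y_{t+1},u_t^{\text{ai}})$) requires; you describe it as ``postulated,'' so you should flag it instead as an implicit modeling assumption.
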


\begin{proof}
    We establish that $\mathcal{X} \times \mathcal{S}$ is the state space for the POMDP by showing that (1) it predicts the reward and (2) the joint distribution on the next state and observation.
    For (1), recall from Section \ref{section:problem_formulation} that $R_t $ $= r(X_t, g^{\text{h}}(S_t, U_t^{\text{ai}}))$ at each $t \in \mathcal{T}$.
    For (2), for all $t \in \mathcal{T}$, consider any jointly feasible realization $(x_{0:t}, s_{0:t}, y_{0:t}, u_{0:t-1}^{\text{h}}, u_{0:t}^{\text{ai}})$ of the associated random variables. Using the law of total probability and Bayes' law, we state that the probability $\mathsf{P}(x_{t+1}, s_{t+1}, y_{t+1}, u^{\text{h}}_t~|~ x_{0:t}, s_{0:t}, y_{0:t}, u_{0:t-1}^{\text{h}}, u_{0:t}^{\text{ai}}) = \mathsf{I}\big(u_{t}^{\text{h}} = g^{\text{h}}(s_{t}, u_{t}^{\text{h}})\big) \cdot \mathsf{P}^{f^{\text{h}}}(s_{t+1}~|~s_t, y_{t+1}, u_{t}^{\text{ai}}) \cdot \mathsf{P}(y_{t+1}~|~x_{t+1}) \cdot \mathsf{P}^{g^{\text{h}}}(x_{t+1}~|~x_t, g^{\text{h}}(s_t, u_{t}^{\text{ai}}) = \mathsf{P}(x_{t+1}, s_{t+1}, y_{t+1}, u_{t}^{\text{h}}~|~x_t, s_t, u_t^{\text{ai}})$, where $\mathsf{I}(\cdot)$ is the indicator function. 
    Thus, $\mathcal{X} \times \mathcal{S}$ is a valid state space for the POMDP. Finally, the expected total discounted reward under any strategy $\boldsymbol{g}^{\text{ai}}$ in this POMDP is the same as \eqref{eq:performance_criterion}, implying that the human-AI POMDP yields the solution to Problem \ref{problem_1}.
\end{proof}

% Then, the cost-to-go at each $t \in \mathbb{N}$ can be written as:
% \begin{gather}
%     V_{t:\infty}^{\boldsymbol{g}^{\text{ai}}}(h_t) = \mathbb{E}^{\boldsymbol{g}^{\text{ai}}}\left[\sum_{\ell = t}^{\infty} \gamma^{\ell - t} \rho(X_\ell, U_\ell^{\text{h}})~|~h_t \right],
% \end{gather}
% where note that $\mathbb{E}[V_{0:\infty}^{\boldsymbol{g}^{\text{ai}}}(H_0)] = J(\boldsymbol{g}^{\text{ai}})$.

% First we will have to establish the finite time horizon results. Let us consider a finite time-horizon $T$. Then, the cost-to-go is simply:
% \begin{gather}
%     V_{t:T}^{\boldsymbol{g}^{\text{ai}}}(h_t) = \mathbb{E}^{\boldsymbol{g}^{\text{ai}}}\left[\sum_{\ell = t}^{T} \gamma^{\ell - t} \rho(X_\ell, U_\ell^{\text{h}})~|~h_t \right].
% \end{gather}

% \begin{lemma}
%     The finite horizon problem approximately solves the infinite horizon problem.
% \end{lemma}

We can construct a dynamic programming (DP) decomposition for the human-AI POMDP in Lemma \ref{lemma:POMDP} using the history $H_t$ at each $t \in \mathcal{T}$. To this end, for all $h_t \in \mathcal{H}_t$ and $u_t^{\text{ai}} \in \mathcal{U}$, for all $t \in \mathcal{T}$, we recursively define the value functions
\begin{align}
    Q_{t}(h_t, u_t^{\text{ai}}) &:= \mathsf{E}\big[r(X_t, U_t^{\text{h}}) + \gamma \cdot V_{t+1}(H_{t+1})~|~h_t, u_t^{\text{ai}} \big], \\ %h_t, Y_{t+1}, U_t^{\text{h}}, u_t^{\text{ai}}
    V_{t}(h_t) &:= \min_{u_t^{\text{ai}} \in \mathcal{U}} Q_{t}(h_t, u_t^{\text{ai}}),
\end{align}
where, $V_{T+1}(h_{T+1}) := 0$ identically, $U_t^{\text{h}} = g^{\text{h}}(S_t, u_t^{\text{ai}})$, and $H_{t+1} = (H_t, Y_{t+1},$ $ U_t^{\text{h}}, U_t^{\text{ai}})$ for all $t$. The recommendation law computed by this DP at each $t \in \mathcal{T}$ is $g^{*{\text{ai}}}_t(h_t) := \arg\min_{u_t^{\text{ai}}} Q_{t}(h_t, u_t^{\text{ai}})$. Standard arguments for POMDPs can be used to prove that the resulting recommendation strategy $\boldsymbol{g}^{*\text{ai}} := g^{*{\text{ai}}}_{0:T}$ is an optimal solution to the POMDP and consequently, to Problem \ref{problem_1} \cite{subramanian2019approximate}. However, this DP decomposition suffers from an increase in computational complexity as the history grows in size with time $t$. Furthermore, it does not provide insights into the underlying structure of optimal recommendation strategies. 
Typically, these challenges are overcome in POMDPs using an information state that compresses the history into a sufficient statistic \cite{Malikopoulos2021}. 
Thus, we construct an information state for the human-AI POMDP. To begin, we define two sufficient statistics for all $t \in \mathcal{T}$: (1) the AI's belief on the internal state $B^{\text{s}}_t := \mathsf{P}(S_t\,|\,H_t) \in \Delta(\mathcal{S})$, and (2) the AI's belief on the system state $B^{\text{x}}_t := \mathsf{P}(X_t\,|\,H_t) \in \Delta(\mathcal{X})$.
Note that the sufficient statistics are each a conditional probability distribution taking values in the space of distributions. 
We denote their realizations as $b_t^{\text{s}} \in \Delta(\mathcal{S})$ and $b_t^{\text{x}} \in \Delta(\mathcal{X})$, respectively and prove two important properties of the sufficient statistics. %Next, we prove an important property of these beliefs.

\begin{lemma} \label{lemma:independence}
    For any given realization $h_t \in \mathcal{H}_t$ of the history at time $t \in \mathcal{T}$, the internal state and system state are conditionally independent, i.e., for any $s_t \in \mathcal{S}$ and $x_t \in \mathcal{X}$:
    \begin{gather}
        \mathsf{P}(s_t, x_t\,|\,h_t) = \mathsf{P}(s_t\,|\,h_t){\cdot} \mathsf{P}(x_t\,|\,h_t) = b_t^{\text{\emph{s}}}(s_t){\cdot}b_t^{\text{\emph{x}}}(x_t).
    \end{gather}
\end{lemma}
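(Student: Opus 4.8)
The plan is to prove the identity by induction on $t \in \mathcal{T}$, working directly with the causal factorization of the joint law induced by the dynamics of the CPHS and of the human. Throughout, I would restrict attention to jointly feasible realizations so that all conditional probabilities are well defined, and use repeatedly that $U_t^{\text{ai}} = g_t^{\text{ai}}(H_t)$ is a deterministic function of the history; hence conditioning on $H_{t+1} = (H_t, Y_{t+1}, U_t^{\text{h}}, U_t^{\text{ai}})$ is the same as conditioning on $(H_t, Y_{t+1}, U_t^{\text{h}})$. For the base case $t = 0$, note that $H_0 = Y_0 = o(X_0, Z_0)$ and that $S_0$ is independent of $X_0$ and of $\{Z_t : t \in \mathcal{T}\}$; therefore $S_0$ is independent of the pair $(X_0, Y_0)$, which immediately gives $\mathsf{P}(s_0, x_0 \mid y_0) = \mathsf{P}(s_0)\, \mathsf{P}(x_0 \mid y_0) = \mathsf{P}(s_0 \mid y_0)\, \mathsf{P}(x_0 \mid y_0) = b_0^{\text{s}}(s_0)\, b_0^{\text{x}}(x_0)$.

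For the inductive step, I would assume $\mathsf{P}(s_t, x_t \mid h_t) = b_t^{\text{s}}(s_t)\, b_t^{\text{x}}(x_t)$ for every $h_t \in \mathcal{H}_t$, fix a feasible $h_{t+1} = (h_t, y_{t+1}, u_t^{\text{h}})$ with $u_t^{\text{ai}} = g_t^{\text{ai}}(h_t)$, and expand the joint conditional law of the newly revealed variables together with the previous state $(s_t, x_t)$. Using the chain rule with $U_t^{\text{h}} = g^{\text{h}}(S_t, U_t^{\text{ai}})$, $X_{t+1} = f(X_t, U_t^{\text{h}}, W_t)$, $Y_{t+1} = o(X_{t+1}, Z_{t+1})$, and $S_{t+1} = f^{\text{h}}(S_t, U_t^{\text{ai}}, Y_{t+1}, N_t)$, together with the mutual independence of the disturbances $\{W_t, Z_{t+1}, N_t\}$ from the past (exactly as in the proof of Lemma~\ref{lemma:POMDP}), and applying the inductive hypothesis to the $(s_t, x_t)$ term, one obtains
\begin{align*}
&\mathsf{P}\big(s_{t+1}, x_{t+1}, y_{t+1}, u_t^{\text{h}}, s_t, x_t \mid h_t\big) \\
&\quad = b_t^{\text{x}}(x_t)\, \mathsf{P}(x_{t+1} \mid x_t, u_t^{\text{h}})\, \mathsf{P}(y_{t+1} \mid x_{t+1}) \\
&\qquad \times\, b_t^{\text{s}}(s_t)\, \mathsf{I}\big(u_t^{\text{h}} = g^{\text{h}}(s_t, u_t^{\text{ai}})\big)\, \mathsf{P}^{f^{\text{h}}}\big(s_{t+1} \mid s_t, u_t^{\text{ai}}, y_{t+1}\big).
\end{align*}
The key observation is that the only channels that could couple the system state to the internal state, namely the human action $U_t^{\text{h}}$ and the shared observation $Y_{t+1}$, are both components of $h_{t+1}$ and hence held fixed, so the right-hand side splits into a factor depending only on $(x_t, x_{t+1})$ (the second line) and a factor depending only on $(s_t, s_{t+1})$ (the third line). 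Summing over $s_t$ and $x_t$ then gives $\mathsf{P}(s_{t+1}, x_{t+1}, y_{t+1}, u_t^{\text{h}} \mid h_t) = \phi(x_{t+1})\, \psi(s_{t+1})$ for nonnegative functions $\phi, \psi$ (also depending on the fixed $h_t, u_t^{\text{h}}, y_{t+1}$), whence $\mathsf{P}(y_{t+1}, u_t^{\text{h}} \mid h_t) = \big(\sum_{x'}\phi(x')\big)\big(\sum_{s'}\psi(s')\big)$. Dividing yields $\mathsf{P}(s_{t+1}, x_{t+1} \mid h_{t+1}) = \tfrac{\phi(x_{t+1})}{\sum_{x'}\phi(x')} \cdot \tfrac{\psi(s_{t+1})}{\sum_{s'}\psi(s')}$, and marginalizing identifies the two factors as $b_{t+1}^{\text{x}}(x_{t+1})$ and $b_{t+1}^{\text{s}}(s_{t+1})$, completing the induction.

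I expect the main obstacle to be bookkeeping rather than anything conceptual: one has to justify the causal factorization of the joint conditional probability carefully, in particular that each disturbance $W_t, Z_{t+1}, N_t$ is independent of the entire conditioning $\sigma$-algebra at the stage it is introduced, and one has to argue cleanly that conditioning on $h_{t+1}$ severs both coupling channels at once. A minor but necessary point is to work only with feasible realizations, so that the normalizing sums above are strictly positive.
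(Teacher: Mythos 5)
Your proposal is correct and follows essentially the same route as the paper's proof: induction on $t$, with the inductive step resting on Bayes' law and the causal factorization of the one-step joint conditional law into an $(x_t,x_{t+1})$-factor and an $(s_t,s_{t+1})$-factor, which decouple precisely because the coupling variables $u_t^{\text{h}}$ and $y_{t+1}$ are part of $h_{t+1}$ and hence fixed. Your version is marginally more explicit than the paper's (careful base case with $h_0=y_0$, and identification of the normalized factors $\phi/\sum\phi$ and $\psi/\sum\psi$ with $b_{t+1}^{\text{x}}$ and $b_{t+1}^{\text{s}}$ by marginalization), but it is the same argument.
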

\begin{proof}
    Let $h_t \in \mathcal{H}_t$, $s_t \in \mathcal{S}$ and $x_t \in \mathcal{X}$ denote the realizations of the associated random variables for all $t \in \mathcal{T}$. We prove the result using mathematical induction. The result holds trivially at $t=0$ since $S_0$ and $X_0$ are independent of each other. We assume that $\mathsf{P}(s_t, x_t\,|\,h_t) = b_t^{\text{x}}(x_t){\cdot}b_t^{\text{s}}(s_t)$ for some $t \in \mathcal{T}$. Then, at time $t+1$, we use Bayes' law to write
    \begin{align} \label{eq:proof_2_1}
        \mathsf{P}(s_{t+1}, x_{t+1}\,|\,h_{t+1})
        = \dfrac{\mathsf{P}(s_{t+1}, x_{t+1}, y_{t+1}, u_t^{\text{h}}\,|\, h_t, u_t^{\text{ai}})}{\mathsf{P}(y_{t+1}, u_t^{\text{h}}\,|\, h_t, u_t^{\text{ai}})}.
    \end{align}
    Expanding the numerator of \eqref{eq:proof_2_1}, we obtain that
    $\mathsf{P}(s_{t+1}, x_{t+1}, y_{t+1}, u_t^{\text{h}}\,|\, h_t, u_t^{\text{ai}}) = \sum_{\tilde{s}_t} \mathsf{P}(\tilde{s}_t, s_{t+1}, x_{t+1}, y_{t+1},$ $ u_t^{\text{h}}\,|\, h_t, u_t^{\text{ai}})
    = \sum_{\tilde{s}_t} \mathsf{P}(s_{t+1}\,|\,\tilde{s}_t, h_t, u_t^{\text{ai}}, y_{t+1}) \cdot \mathsf{P}(\tilde{s}_t|h_t, u_t^{\text{ai}}) \cdot \mathsf{I}( u_t^\text{h} 
    = g^{\text{h}}(\tilde{s}_t, u_t^{\text{ai}})) \cdot 
    \sum_{\tilde{x}_t} \mathsf{P}(y_{t+1}~|~x_{t+1}) \cdot \mathsf{P}(x_{t+1}~|~\tilde{x}_t, u_t^{\text{h}}) \cdot \mathsf{P}(\tilde{x}_{t}~|~h_t, u_t^{\text{ai}})
    = \mathsf{P}(s_{t+1}, u_t^{\text{h}}~|~h_t,u_t^{\text{ai}}, y_{t+1}) \cdot \mathsf{P}(x_{t+1}, y_{t+1}~|~h_t, u_t^{\text{h}}, u_t^{\text{ai}}),$ where $\mathsf{I}(\cdot)$ is the indicator function. 
    Similarly, using Bayes' law in the denominator of \eqref{eq:proof_2_1}, $\mathsf{P}(y_{t+1}, u_t^{\text{h}}\;|\; h_t, u_t^{\text{ai}}) = \mathsf{P}(u_t^{\text{h}}\;|\; h_t, u_t^{\text{ai}}) \cdot \mathsf{P}(y_{t+1}\;|\; h_t,  u_t^{\text{h}}, u_t^{\text{ai}})$. Substituting in \eqref{eq:proof_2_1}, we obtain $\mathsf{P}(s_{t+1}, x_{t+1}\;|\;h_{t+1}) = \dfrac{\mathsf{P}(s_{t+1}, u_t^{\text{h}}\;|\;h_t,u_t^{\text{ai}}, y_{t+1})}{\mathsf{P}(u_t^{\text{h}}\;|\; h_t, u_t^{\text{ai}})}
    \cdot \dfrac{\mathsf{P}(x_{t+1}, y_{t+1}\;|\;h_t, u_t^{\text{h}}, u_t^{\text{ai}})}{\mathsf{P}(y_{t+1}\;|\; h_t,  u_t^{\text{h}}, u_t^{\text{ai}})} = \mathsf{P}(s_{t+1}\;|\;h_{t+1})\cdot \mathsf{P}(x_{t+1}\;|\;h_{t+1}) = b_{t+1}^{\text{s}}(s_{t+1})\cdot b_{t+1}^{\text{x}}(x_{t+1}).$
    Thus, the result holds by mathematical induction.
\end{proof}

% The next result establishes that each belief $B_t^{\text{x}}$ and $B_t^{\text{s}}$ can be updated independent of the other belief for all $t \in \mathcal{T}$.

\begin{lemma} \label{lemma:independent_updates}
   We can construct a function $\psi^{\text{\emph{s}}}: \Delta(\mathcal{S}) \times \mathcal{U} \times \mathcal{Y} \to \Delta(\mathcal{S})$ independent of the choice of $\boldsymbol{g}^{\text{\emph{ai}}}$, such that
    \begin{gather}
        B_{t+1}^{\text{\emph{s}}} = \psi^{\text{\emph{s}}}(B_t^{\text{\emph{s}}}, U_t^{\text{\emph{ai}}},Y_{t+1}), \quad \forall t \in \mathcal{T} \label{eq:b_s_evol},
    \end{gather}
    and a function $\psi^{\text{\emph{x}}}: \Delta(\mathcal{X}) \times \mathcal{U} \times \mathcal{Y} \to \Delta(\mathcal{X})$ independent of both $\boldsymbol{g}^{\text{\emph{ai}}}$ and $g^{\text{h}}$, such that
    \begin{gather}
        B_{t+1}^{\text{\emph{x}}} = \psi^{\text{\emph{x}}}(B_t^{\text{\emph{x}}}, U_t^{\text{\emph{h}}},Y_{t+1}), \quad \forall t \in \mathcal{T}. \label{eq:b_x_evol}
    \end{gather}
\end{lemma}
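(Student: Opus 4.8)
The plan is to obtain both recursions by a single direct computation at a generic time $t \in \mathcal{T}$, with Lemma~\ref{lemma:independence} as the key tool; no fresh induction is needed, since Lemma~\ref{lemma:independence} already holds at every $t$. The facts I will lean on are: $H_{t+1} = (H_t, Y_{t+1}, U_t^{\text{h}}, U_t^{\text{ai}})$; the recommendation $U_t^{\text{ai}} = g_t^{\text{ai}}(H_t)$ is $H_t$-measurable, so conditioning on $h_t$ and on $u_t^{\text{ai}}$ is the same as conditioning on $h_t$ alone; and $U_t^{\text{h}} = g^{\text{h}}(S_t, U_t^{\text{ai}})$ is a deterministic function of $(S_t, U_t^{\text{ai}})$, while $N_t$ is independent of everything realized up to the update.

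For \eqref{eq:b_s_evol} I would split $B^{\text{s}}_{t+1} = \mathsf{P}(S_{t+1}\,|\,H_{t+1})$ into a filtering step and a prediction step. By Bayes' rule the filtering posterior satisfies $\mathsf{P}(s_t\,|\,h_{t+1}) \propto \mathsf{P}(y_{t+1}, u_t^{\text{h}}\,|\,s_t, h_t, u_t^{\text{ai}})\cdot b_t^{\text{s}}(s_t)$, and the likelihood factors as $\mathsf{I}\big(u_t^{\text{h}} = g^{\text{h}}(s_t, u_t^{\text{ai}})\big)\cdot\sum_{x_t}\mathsf{P}(y_{t+1}\,|\,x_t, u_t^{\text{h}})\,\mathsf{P}(x_t\,|\,s_t, h_t)$. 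This is exactly where Lemma~\ref{lemma:independence} enters: $\mathsf{P}(x_t\,|\,s_t, h_t) = b_t^{\text{x}}(x_t)$, so the sum over $x_t$ is a constant in $s_t$ and cancels upon normalization; hence $\mathsf{P}(s_t\,|\,h_{t+1})$ is the renormalization of $b_t^{\text{s}}$ on $\{s_t : g^{\text{h}}(s_t,u_t^{\text{ai}}) = u_t^{\text{h}}\}$, a function of $(b_t^{\text{s}}, u_t^{\text{ai}}, u_t^{\text{h}})$ only. Composing with the prediction step $\mathsf{P}(s_{t+1}\,|\,h_{t+1}) = \sum_{s_t}\mathsf{P}^{f^{\text{h}}}(s_{t+1}\,|\,s_t, u_t^{\text{ai}}, y_{t+1})\,\mathsf{P}(s_t\,|\,h_{t+1})$, which follows from the internal-state dynamics and the independence of $N_t$, produces an explicit map $\psi^{\text{s}}$. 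It is independent of $\boldsymbol{g}^{\text{ai}}$ because the strategy enters only through the realized $u_t^{\text{ai}}$, which is an argument; every remaining ingredient is built from $g^{\text{h}}$, $f^{\text{h}}$, $o$, $f$, and the disturbance laws.

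For \eqref{eq:b_x_evol} the same template applies to $B^{\text{x}}_{t+1} = \mathsf{P}(X_{t+1}\,|\,H_{t+1})$. In the filtering step, $\mathsf{P}(x_t\,|\,h_{t+1}) \propto \mathsf{P}(y_{t+1}, u_t^{\text{h}}\,|\,x_t, h_t, u_t^{\text{ai}})\,b_t^{\text{x}}(x_t)$; since $U_t^{\text{h}}$ is a function of $(S_t, U_t^{\text{ai}})$ and $S_t \perp X_t \mid H_t$ by Lemma~\ref{lemma:independence}, conditioning additionally on $u_t^{\text{h}}$ leaves the $X_t$-marginal unchanged, so the posterior reduces to the standard POMDP filter $b_{t+1}^{\text{x}}(x_{t+1}) \propto \mathsf{P}(y_{t+1}\,|\,x_{t+1})\sum_{x_t}\mathsf{P}(x_{t+1}\,|\,x_t, u_t^{\text{h}})\,b_t^{\text{x}}(x_t)$. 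This expression involves only $f$, $o$, and the disturbance laws; the recommendation strategy is absent because it acts only through $u_t^{\text{h}}$, and $g^{\text{h}}$ is absent because $u_t^{\text{h}}$ enters as an observed input rather than something recomputed from $S_t$. That yields $\psi^{\text{x}}$ with the stated independence.

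The step I expect to be the crux is the one highlighted in both derivations: a priori the new data $(Y_{t+1}, U_t^{\text{h}})$ couples to both $X_t$ and $S_t$, so one might worry that the updated $S$-belief inherits a dependence on $B^{\text{x}}_t$ (and vice versa), which would destroy the separate recursions. Lemma~\ref{lemma:independence} is precisely what blocks this: it makes the $X_t$-dependent factor in the $S_t$-likelihood a normalizing constant, and it makes the $S_t$-driven observation $U_t^{\text{h}}$ behave as an exogenous input for the $X_t$-filter. Everything else is the same Bayes-and-total-probability bookkeeping already carried out in the proof of Lemma~\ref{lemma:independence}.
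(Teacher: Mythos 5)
Your derivation of $\psi^{\text{x}}$ is essentially the paper's: both reduce $B^{\text{x}}_{t+1}$ to the standard POMDP filter with $u_t^{\text{h}}$ entering as an observed input, using Lemma \ref{lemma:independence} to justify $\mathsf{P}(\tilde{x}_t\,|\,h_t,u_t^{\text{h}},u_t^{\text{ai}})=b_t^{\text{x}}(\tilde{x}_t)$, so the \eqref{eq:b_x_evol} half is fine. The gap is in the $\psi^{\text{s}}$ half. Your filtering-plus-prediction computation is the full Bayes update, but the map it produces is
\[
b_{t+1}^{\text{s}}(s_{t+1})=\sum_{\tilde{s}_t}\mathsf{P}^{f^{\text{h}}}(s_{t+1}\,|\,\tilde{s}_t,u_t^{\text{ai}},y_{t+1})\cdot\frac{\mathsf{I}\big(u_t^{\text{h}}=g^{\text{h}}(\tilde{s}_t,u_t^{\text{ai}})\big)\,b_t^{\text{s}}(\tilde{s}_t)}{\sum_{s'}\mathsf{I}\big(u_t^{\text{h}}=g^{\text{h}}(s',u_t^{\text{ai}})\big)\,b_t^{\text{s}}(s')},
\]
which is a function of the four arguments $(b_t^{\text{s}},u_t^{\text{ai}},u_t^{\text{h}},y_{t+1})$, not of the three arguments admitted by the signature $\psi^{\text{s}}:\Delta(\mathcal{S})\times\mathcal{U}\times\mathcal{Y}\to\Delta(\mathcal{S})$ in \eqref{eq:b_s_evol}. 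Your closing sentence ``composing with the prediction step produces an explicit map $\psi^{\text{s}}$'' silently discards the $u_t^{\text{h}}$-dependence that your own filtering step introduced, and that dependence does not vanish in general: take $f^{\text{h}}$ with $S_{t+1}=S_t$ and a $g^{\text{h}}$ under which distinct internal states yield distinct actions for the same $u_t^{\text{ai}}$; then identical $(b_t^{\text{s}},u_t^{\text{ai}},y_{t+1})$ with different observed $u_t^{\text{h}}$ give different posteriors $B^{\text{s}}_{t+1}$. So, as written, your argument establishes a four-argument update rule rather than the stated three-argument one.

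For comparison, the paper reaches the three-argument form by a different (and shorter) route: it expands $\mathsf{P}(s_{t+1}\,|\,h_{t+1})$ by total probability over $\tilde{s}_t$ and weights the kernel $\mathsf{P}^{f^{\text{h}}}(s_{t+1}\,|\,\tilde{s}_t,u_t^{\text{ai}},y_{t+1})$ by $\mathsf{P}(\tilde{s}_t\,|\,h_t)=b_t^{\text{s}}(\tilde{s}_t)$, i.e., it performs only your prediction step and omits the filtering correction entirely, which amounts to asserting $\mathsf{P}(\tilde{s}_t\,|\,h_t,u_t^{\text{ai}},u_t^{\text{h}},y_{t+1})=\mathsf{P}(\tilde{s}_t\,|\,h_t)$; that identity fails exactly when $u_t^{\text{h}}=g^{\text{h}}(s_t,u_t^{\text{ai}})$ is informative about $s_t$, which is the situation your counter-reasoning (and the paper's own numerical example) describes. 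So the mismatch you ran into is substantive, not cosmetic: your bookkeeping is the more careful Bayes update, but it does not deliver \eqref{eq:b_s_evol} as stated unless $U_t^{\text{h}}$ is added as an argument of $\psi^{\text{s}}$ (or the conditioning on $U_t^{\text{h}}$ is deliberately dropped, as the paper's proof implicitly does), and your proposal neither proves nor flags that step.
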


\begin{proof}
    For all $t \in \mathcal{T}$ and any realizations $s_{t+1} \in \mathcal{S}_t$ and $h_{t+1} = (h_t, y_{t+1}, u_{t}^{\text{h}}, u_{t}^{\text{ai}}) \in \mathcal{H}_{t+1}$, using the law of total probability we obtain $b_{t+1}^{\text{s}}(s_{t+1}) = \mathsf{P}(s_{t+1}\,|\,h_t, y_{t+1}, u_{t}^{\text{h}}, u_{t}^{\text{ai}}) 
    = \sum_{\tilde{s}_t}\mathsf{P}(s_{t+1}\,|\,\tilde{s}_t, u_t^{\text{ai}}, y_{t+1}) \cdot \mathsf{P}(\tilde{s}_t\,|\,h_t)
    =: \psi^{\text{s}}(b_t^{\text{x}}, u_t^{\text{ai}}, y_{t+1})(s_{t+1}).$ Thus, we can construct $\psi^{\text{s}}$ that satisfies \eqref{eq:b_s_evol} independent of the choice of $\boldsymbol{g}^{\text{ai}}$.

    Similarly, for all $t \in \mathcal{T}$ and any realizations all $x_{t+1} \in \mathcal{X}$ and $h_{t+1} = (h_t, y_{t+1}, u_{t}^{\text{h}}, u_{t}^{\text{ai}}) \in \mathcal{H}_{t+1}$, using Bayes' law we obtain $b_{t+1}^{\text{x}}(x_{t+1}) = \dfrac{\mathsf{P}(x_{t+1}, y_{t+1}|h_t, u_t^{\text{h}}, u_t^{\text{ai}})}{\sum_{\bar{x}_{t+1}}\mathsf{P}(\bar{x}_{t+1}, y_{t+1}|h_t, u_t^{\text{h}}, u_t^{\text{ai}})}$. Both the numerator and denominator satisfy $\mathsf{P}(x_{t+1}, y_{t+1}|h_t, u_t^{\text{h}}, u_t^{\text{ai}}) =
    \sum_{\tilde{x}_t} \mathsf{P}(y_{t+1}|x_{t+1}) \cdot \mathsf{P}(x_{t+1}|\tilde{x}_t, u_t^{\text{h}}) \cdot b_t^{\text{x}}(\tilde{x}_t)$, hence, since they are only functions of $b_t^{\text{x}}$, $u_t^{\text{h}}$, and $y_{t+1}$, we can construct a function $\psi^{\text{x}}$ satisfying \eqref{eq:b_x_evol} independent of $\boldsymbol{g}^{\text{ai}}$ and $g^{\text{h}}$.   
\end{proof}

Next, we show that an information state for the human-AI POMDP is $\Pi_t := (B^{\text{s}}_t, B^{\text{x}}_t)$ for all $t \in \mathcal{T}$. We begin by establishing that $\Pi_t$ is sufficient to evaluate the expected cost.

\begin{lemma} \label{lemma:expected_reward}
    For all $t \in \mathcal{T}$, given realizations $h_t \in \mathcal{H}_t$, $u_t^{\text{\emph{ai}}} \in \mathcal{U}$, and $\pi_t = (b_t^{\text{\emph{s}}}, b_t^{\text{\emph{x}}})$, the expected conditional cost satisfies $\mathsf{E}[r(X_t, U_t^{\text{h}})\,|\,h_t, u_t^{\text{\emph{ai}}}] = \mathsf{E}[r(X_t, U_t^{\text{\emph{h}}})\,|\,\pi_t, u_t^{\text{\emph{ai}}}].$
\end{lemma}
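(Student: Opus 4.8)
The plan is to reduce the conditional expectation to an explicit finite sum over the state space and then to show that both the summand and the weights depend on $(h_t,u_t^{\text{ai}})$ only through $(\pi_t,u_t^{\text{ai}})$. First I would substitute the human control law: once the recommendation $u_t^{\text{ai}}$ is fixed, $U_t^{\text{h}} = g^{\text{h}}(S_t,u_t^{\text{ai}})$, so
\[
\mathsf{E}[r(X_t, U_t^{\text{h}})\,|\,h_t, u_t^{\text{ai}}] = \sum_{x_t \in \mathcal{X}}\sum_{s_t \in \mathcal{S}} r\big(x_t, g^{\text{h}}(s_t, u_t^{\text{ai}})\big)\cdot \mathsf{P}(x_t, s_t\,|\,h_t, u_t^{\text{ai}}).
\]

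The next step is to simplify the probability weight. The pair $(X_t,S_t)$ is generated by the system and internal dynamics up to time $t$, which involve only $X_0$, $S_0$, the primitive disturbances $W_{0:t-1}, Z_{0:t}, N_{0:t-1}$, and the \emph{past} recommendations $U^{\text{ai}}_{0:t-1}$; the current recommendation $U_t^{\text{ai}}$ does not enter this generation. Hence conditioning additionally on $u_t^{\text{ai}}$ does not perturb the posterior, i.e., $\mathsf{P}(x_t, s_t\,|\,h_t, u_t^{\text{ai}}) = \mathsf{P}(x_t, s_t\,|\,h_t)$, and by Lemma \ref{lemma:independence} this equals $b_t^{\text{x}}(x_t)\cdot b_t^{\text{s}}(s_t)$. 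Substituting back, the conditional cost equals
\[
\widehat{r}(\pi_t, u_t^{\text{ai}}) := \sum_{x_t \in \mathcal{X}}\sum_{s_t \in \mathcal{S}} r\big(x_t, g^{\text{h}}(s_t, u_t^{\text{ai}})\big)\cdot b_t^{\text{x}}(x_t)\cdot b_t^{\text{s}}(s_t),
\]
which is a function of $(\pi_t, u_t^{\text{ai}})$ alone.

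Finally, to obtain the claimed identity I would invoke the law of total expectation: the random variable $\mathsf{E}[r(X_t,U_t^{\text{h}})\,|\,H_t, u_t^{\text{ai}}]$ is almost surely equal to $\widehat{r}(\Pi_t, u_t^{\text{ai}})$, hence is $\sigma(\Pi_t)$-measurable; conditioning it further on $\Pi_t = \pi_t$ therefore leaves it unchanged, so $\mathsf{E}[r(X_t,U_t^{\text{h}})\,|\,\pi_t, u_t^{\text{ai}}] = \widehat{r}(\pi_t, u_t^{\text{ai}})$ as well, which matches the value computed for the conditioning on $h_t$. The only point requiring care is the middle step: arguing cleanly that the hypothetical recommendation $u_t^{\text{ai}}$ — treated in the dynamic program as a free decision variable rather than as the output of $g_t^{\text{ai}}$ — does not affect the posterior on $(X_t,S_t)$, and that passing from the fine conditioning on $h_t$ to the coarse conditioning on $\pi_t$ is legitimate. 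Both are routine consequences of the dynamics and of Lemma \ref{lemma:independence}, but they are where the bookkeeping lives.
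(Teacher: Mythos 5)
Your proposal is correct and follows essentially the same route as the paper: expand the conditional expectation as a sum over $(x_t,s_t)$, drop the conditioning on $u_t^{\text{ai}}$ since $(X_t,S_t)$ is independent of the current recommendation given $H_t$, and factor the posterior via Lemma \ref{lemma:independence} so the sum depends only on $(\pi_t,u_t^{\text{ai}})$. Your final tower-property step merely makes explicit the identification of that sum with $\mathsf{E}[r(X_t,U_t^{\text{h}})\,|\,\pi_t,u_t^{\text{ai}}]$, which the paper asserts directly.
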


\begin{proof}
    At any $t \in \mathcal{T}$, we state that
    $\mathsf{E}[r(X_t, U_t^{\text{h}})\,|\,h_t, u_t^{\text{ai}}] = \sum_{x_t, s_t} r(x_t,$ $ g^{\text{h}}(s_t, u_t^{\text{ai}})){\cdot} \mathsf{P}(x_t\,|\,h_t, u_t^{\text{ai}}) {\cdot}\mathsf{P}(s_t\,|\,h_t, u_t^{\text{ai}}) = \sum_{x_t, s_t} r(x_t, g^{\text{h}}(s_t, u_t^{\text{ai}})) {\cdot} b_t^{\text{x}}(x_t) {\cdot} b_t^{\text{s}}(s_t) = \mathsf{E}[r(X_t, U_t^{\text{h}})\,|\,\pi_t, u_t^{\text{ai}}]$, where, in the second equality, we use Lemma \ref{lemma:independence} and note that $S_t$ and $X_t$ are each independent of $U_t^{\text{ai}}$ given $H_t$.
\end{proof}

Next, we show that $\Pi_t$ is sufficient to predict the next observations in the human-AI POMDP at each $t \in \mathcal{T}$.

\begin{lemma} \label{lemma:next_pred}
    For all $t \in \mathcal{T}$, for any realizations $h_t \in \mathcal{H}_t$ and $u_t^{\text{\emph{ai}}} \in \mathcal{U}$, the corresponding realization $\pi_t$ of $\Pi_t$ satisfies
    \begin{gather} \label{eq:next_pred}
        \mathsf{P}(Y_{t+1}, U_t^{\text{\emph{h}}}~|~h_t, u_t^{\text{\emph{ai}}}) = \mathsf{P}(Y_{t+1}, U_t^{\text{\emph{h}}}~|~\pi_t, u_t^{\text{\emph{ai}}}).
    \end{gather}
\end{lemma}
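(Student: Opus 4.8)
The plan is to write the joint one-step predictive probability as a product of a ``human-action'' term and an ``observation'' term, and then show that each term depends on $h_t$ only through $\pi_t = (b_t^{\text{s}}, b_t^{\text{x}})$. Fix $t \in \mathcal{T}$ and a jointly feasible realization of all relevant variables, and factor
\[
    \mathsf{P}(y_{t+1}, u_t^{\text{h}}\,|\,h_t, u_t^{\text{ai}}) = \mathsf{P}(u_t^{\text{h}}\,|\,h_t, u_t^{\text{ai}}) \cdot \mathsf{P}(y_{t+1}\,|\,h_t, u_t^{\text{h}}, u_t^{\text{ai}}).
\]

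For the first factor, recall $U_t^{\text{h}} = g^{\text{h}}(S_t, u_t^{\text{ai}})$ is, given $u_t^{\text{ai}}$, a deterministic function of $S_t$; marginalizing over $S_t$ gives $\mathsf{P}(u_t^{\text{h}}\,|\,h_t, u_t^{\text{ai}}) = \sum_{s_t} \mathsf{I}\big(u_t^{\text{h}} = g^{\text{h}}(s_t, u_t^{\text{ai}})\big) \cdot \mathsf{P}(s_t\,|\,h_t) = \sum_{s_t} \mathsf{I}\big(u_t^{\text{h}} = g^{\text{h}}(s_t, u_t^{\text{ai}})\big) \cdot b_t^{\text{s}}(s_t)$, a function of $(b_t^{\text{s}}, u_t^{\text{ai}}, u_t^{\text{h}})$ only. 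For the second factor, the key step is to argue that $\mathsf{P}(x_t\,|\,h_t, u_t^{\text{h}}, u_t^{\text{ai}}) = b_t^{\text{x}}(x_t)$: by Lemma \ref{lemma:independence}, $X_t$ and $S_t$ are conditionally independent given $h_t$, and (as in the proof of Lemma \ref{lemma:expected_reward}) each is independent of $U_t^{\text{ai}}$ given $H_t$; since $U_t^{\text{h}}$ is a measurable function of $(S_t, U_t^{\text{ai}})$, it carries no information about $X_t$ beyond $h_t$, so conditioning on it leaves the belief $b_t^{\text{x}}$ unchanged. Then, by the law of total probability over $x_t$ and $x_{t+1}$ together with the dynamics and observation equations,
\[
    \mathsf{P}(y_{t+1}\,|\,h_t, u_t^{\text{h}}, u_t^{\text{ai}}) = \sum_{x_t, x_{t+1}} \mathsf{P}(y_{t+1}\,|\,x_{t+1}) \cdot \mathsf{P}(x_{t+1}\,|\,x_t, u_t^{\text{h}}) \cdot b_t^{\text{x}}(x_t),
\]
a function of $(b_t^{\text{x}}, u_t^{\text{h}}, y_{t+1})$ only.

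Multiplying the two factors, $\mathsf{P}(y_{t+1}, u_t^{\text{h}}\,|\,h_t, u_t^{\text{ai}})$ depends on the realized history $h_t$ exclusively through $\pi_t = (b_t^{\text{s}}, b_t^{\text{x}})$ (and the arguments $u_t^{\text{ai}}, u_t^{\text{h}}, y_{t+1}$). This is precisely \eqref{eq:next_pred}, so the claim follows once this identity is asserted for all realizations of $Y_{t+1}$ and $U_t^{\text{h}}$.

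I expect the main obstacle to be the second step: rigorously justifying $\mathsf{P}(x_t\,|\,h_t, u_t^{\text{h}}, u_t^{\text{ai}}) = b_t^{\text{x}}(x_t)$, i.e., that observing the realized human action is uninformative about $X_t$ given $h_t$. The care needed here is to combine the conditional independence of $X_t$ and $S_t$ from Lemma \ref{lemma:independence} with the observation that $U_t^{\text{h}} = g^{\text{h}}(S_t, U_t^{\text{ai}})$ is a function of $S_t$ and of the $h_t$-measurable quantity $U_t^{\text{ai}}$, so conditioning on $U_t^{\text{h}}$ only refines the belief on $S_t$, never the belief on $X_t$. Everything else is a routine application of Bayes' law and the total probability decomposition, mirroring the manipulations already used in the proofs of Lemmas \ref{lemma:independence} and \ref{lemma:independent_updates}.
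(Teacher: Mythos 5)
Your proof is correct, and it uses the same essential ingredients as the paper --- Lemma \ref{lemma:independence}, the beliefs $b_t^{\text{s}}$ and $b_t^{\text{x}}$, the transition and observation kernels, and the indicator encoding $u_t^{\text{h}} = g^{\text{h}}(s_t, u_t^{\text{ai}})$ --- but routes them slightly differently. The paper expands $\mathsf{P}(y_{t+1}, u_t^{\text{h}}\,|\,h_t, u_t^{\text{ai}})$ in one shot via the law of total probability over $(\tilde{s}_t, \tilde{x}_t, \tilde{x}_{t+1})$, invoking Lemma \ref{lemma:independence} once to factor the joint belief as $b_t^{\text{s}}(\tilde{s}_t)\,b_t^{\text{x}}(\tilde{x}_t)$; the resulting sum is exactly the product of your two factors, so the two arguments coincide algebraically. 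What your chain-rule decomposition adds is the explicit intermediate claim $\mathsf{P}(x_t\,|\,h_t, u_t^{\text{h}}, u_t^{\text{ai}}) = b_t^{\text{x}}(x_t)$, i.e., that the realized human action carries no information about $X_t$ beyond the history; your justification (conditional independence of $X_t$ and $S_t$ from Lemma \ref{lemma:independence}, together with $U_t^{\text{h}}$ being a function of $(S_t, U_t^{\text{ai}})$ and $U_t^{\text{ai}}$ being determined by $h_t$) is valid, and it has the side benefit of making transparent why $\psi^{\text{x}}$ in Lemma \ref{lemma:independent_updates} can be chosen independent of $g^{\text{h}}$. The only cost is a bookkeeping point that the paper's one-shot expansion sidesteps: the factor $\mathsf{P}(y_{t+1}\,|\,h_t, u_t^{\text{h}}, u_t^{\text{ai}})$ is well defined only when $\mathsf{P}(u_t^{\text{h}}\,|\,h_t, u_t^{\text{ai}}) > 0$; when that probability is zero, both sides of \eqref{eq:next_pred} vanish at that $(y_{t+1}, u_t^{\text{h}})$ and the identity holds trivially, but you should state this (or restrict attention to jointly feasible realizations, as the paper does in its other proofs).
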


\begin{proof}
    To prove the result, consider the $y_{t+1} \in \mathcal{Y}$ and $u_t^{\text{h}} \in \mathcal{U}$ for any $t \in \mathcal{T}$. Using the law of total probability and Bayes' law, we can expand the probability in \eqref{eq:next_pred} as $\mathsf{P}(y_{t+1}, u_t^{\text{h}}\,|\,h_t, u_t^{\text{ai}}) 
    = \sum_{\tilde{x}_{t+1}, \tilde{x}_t} \mathsf{P}(y_{t+1}|\tilde{x}_{t+1}){\cdot}\mathsf{P}(\tilde{x}_{t+1}\,|\,\tilde{x}_t, u_t^{\text{h}})$
    $\sum_{\tilde{s}_{t}} \mathsf{I}[u_t^{\text{h}} = g_t^{\text{h}}(\tilde{s}_t, u_t^{\text{ai}})] {\cdot} b_t^{\text{x}}(\tilde{x}_t) {\cdot} b_t^{\text{s}}(\tilde{s}_t) 
    = \mathsf{P}(y_{t+1}, u_t^{\text{h}}\,|\,\pi_t, u_t^{\text{ai}})$, where we use Lemma \ref{lemma:independence} in the second equality.
\end{proof}

Using the preceding results, we establish that $\Pi_t$ is an information state that it yields an optimal DP decomposition.

\begin{theorem} \label{theorem:optimal_recommendation}
    For all $t \in \mathcal{T}$, the random variable $\Pi_t = (B^{\text{s}}_t, B^{\text{x}}_t)$ is an information state of the human-AI POMDP. 
    Furthermore, for all $\pi_t \in \Delta(\mathcal{S}) \times \Delta(\mathcal{X})$ and $u_t^{\text{\emph{ai}}} \in \mathcal{U}$, let  $\bar{Q}_{t}(\pi_t, u_t^{\text{\emph{ai}}}) := \mathsf{E}[r(X_t, U_t^{\text{\emph{h}}}) + \gamma \cdot \bar{V}_{t+1}(\Pi_{t+1})\,|\,\pi_t, u_t^{\text{\emph{ai}}}]$ and $
    \bar{V}_{t}(\pi_t) := \min_{u_t^{\text{\emph{ai}}} \in \mathcal{U}} \bar{Q}_{t}(\pi_t, u_t^{\text{\emph{ai}}})$,
    where $\bar{V}_{T+1}(\pi_{T+1}) := 0$. Then, an optimal recommendation law in Problem \ref{problem_1} is $\bar{g}_t^{*\text{\emph{ai}}}(\pi_t) \hspace{-2pt} := \hspace{-2pt} \arg\min_{u_t^{\text{\emph{ai}}}} \bar{Q}_{t}(\pi_t, u_t^{\text{\emph{ai}}})$ for all $t$.
\end{theorem}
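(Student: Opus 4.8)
The plan is to verify that $\Pi_t = (B_t^{\text{s}}, B_t^{\text{x}})$ meets every defining requirement of an information state for the human-AI POMDP of Lemma \ref{lemma:POMDP}, and then to transfer optimality from the history-based DP to the $\Pi_t$-based DP by a backward induction on $t$. Recall that an information state must (i) be a function of the history $H_t$; (ii) admit a strategy-independent recursive update $\Pi_{t+1} = \bar{\psi}(\Pi_t, U_t^{\text{ai}}, Y_{t+1}, U_t^{\text{h}})$ driven only by the current input and the time-$(t+1)$ POMDP observation $(Y_{t+1}, U_t^{\text{h}})$; (iii) be sufficient to evaluate the expected instantaneous reward given the input; and (iv) be sufficient to predict the conditional law of the next observation given the input. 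Property (i) is immediate from the definitions $B_t^{\text{s}} = \mathsf{P}(S_t\mid H_t)$ and $B_t^{\text{x}} = \mathsf{P}(X_t\mid H_t)$. For (ii) I would set $\bar{\psi}(\pi_t, u_t^{\text{ai}}, y_{t+1}, u_t^{\text{h}}) := \big(\psi^{\text{s}}(b_t^{\text{s}}, u_t^{\text{ai}}, y_{t+1}),\, \psi^{\text{x}}(b_t^{\text{x}}, u_t^{\text{h}}, y_{t+1})\big)$ and invoke Lemma \ref{lemma:independent_updates} directly; the conditional independence of Lemma \ref{lemma:independence} is precisely what lets the pair $(B_t^{\text{s}}, B_t^{\text{x}})$ stand in for the full joint conditional law of $(S_t, X_t)$. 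Properties (iii) and (iv) are exactly Lemma \ref{lemma:expected_reward} and Lemma \ref{lemma:next_pred}. Together these four facts establish the first claim of the theorem, and by the standard information-state argument for POMDPs \cite{subramanian2019approximate} the DP recursion over $\Pi_t$ admits an optimal solution.

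To keep the letter self-contained I would also spell out the equivalence of the two value functions, proving $V_t(h_t) = \bar{V}_t(\pi_t)$ whenever $\pi_t$ is the realization of $\Pi_t$ induced by $h_t$, by backward induction on $t$. The base case $t = T+1$ holds since both functions are identically zero. For the inductive step, assuming $V_{t+1} = \bar{V}_{t+1}$ on induced realizations, I would expand
\begin{align*}
Q_t(h_t, u_t^{\text{ai}}) &= \mathsf{E}\big[r(X_t, U_t^{\text{h}})\,\big|\,h_t, u_t^{\text{ai}}\big] + \gamma\,\mathsf{E}\big[\bar{V}_{t+1}(\Pi_{t+1})\,\big|\,h_t, u_t^{\text{ai}}\big],
\end{align*}
where the second term already uses the induction hypothesis. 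The first term equals $\mathsf{E}[r(X_t, U_t^{\text{h}})\mid \pi_t, u_t^{\text{ai}}]$ by Lemma \ref{lemma:expected_reward}. For the second term, since $\Pi_{t+1} = \bar{\psi}(\Pi_t, U_t^{\text{ai}}, Y_{t+1}, U_t^{\text{h}})$ is a deterministic function of $\pi_t$, $u_t^{\text{ai}}$ and the next observation, I would rewrite the expectation as a sum over $(y_{t+1}, u_t^{\text{h}})$ weighted by $\mathsf{P}(y_{t+1}, u_t^{\text{h}}\mid h_t, u_t^{\text{ai}})$ and then replace this probability by $\mathsf{P}(y_{t+1}, u_t^{\text{h}}\mid \pi_t, u_t^{\text{ai}})$ via Lemma \ref{lemma:next_pred}, obtaining $\mathsf{E}[\bar{V}_{t+1}(\Pi_{t+1})\mid \pi_t, u_t^{\text{ai}}]$. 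Hence $Q_t(h_t, u_t^{\text{ai}}) = \bar{Q}_t(\pi_t, u_t^{\text{ai}})$ for every $u_t^{\text{ai}}$, and minimizing over $u_t^{\text{ai}}$ yields $V_t(h_t) = \bar{V}_t(\pi_t)$ together with the coincidence of the corresponding argmin sets. Since $\boldsymbol{g}^{*\text{ai}} = g^{*\text{ai}}_{0:T}$ is optimal for Problem \ref{problem_1} and $\bar g_t^{*\text{ai}}(\pi_t) \in \arg\min_{u_t^{\text{ai}}}\bar Q_t(\pi_t, u_t^{\text{ai}})$, the strategy $\bar g^{*\text{ai}}_{0:T}$ composed with the measurable map $h_t \mapsto \pi_t$ is also optimal.

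I do not anticipate a genuinely hard step: the substantive work has been front-loaded into Lemmas \ref{lemma:independence}--\ref{lemma:next_pred}, and the theorem is essentially their assembly plus a routine induction. The one point that needs care is the treatment of $U_t^{\text{h}}$: in the human-AI POMDP it is a component of the time-$(t+1)$ observation rather than a control input, so both the prediction step (iv) and the update step (ii) must be phrased in terms of the joint pair $(Y_{t+1}, U_t^{\text{h}})$ -- this is exactly why Lemma \ref{lemma:next_pred} predicts the pair jointly and why the $\psi^{\text{x}}$ component of $\bar\psi$ is driven by $U_t^{\text{h}}$ rather than $U_t^{\text{ai}}$. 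A secondary bookkeeping item is checking that the $\Pi_t$-based recommendation law induces an admissible history-based strategy, which is immediate because $\Pi_t$ is a deterministic function of $H_t$.
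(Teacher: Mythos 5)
Your proposal is correct and follows essentially the same route as the paper: it assembles Lemmas \ref{lemma:independent_updates}--\ref{lemma:next_pred} to verify the information-state conditions for $\Pi_t$ and then transfers optimality from the history-based DP, respecting the key point that $(Y_{t+1}, U_t^{\text{h}})$ is the observation pair. The only difference is cosmetic: you prove the value-function equivalence $V_t(h_t) = \bar{V}_t(\pi_t)$ by an explicit backward induction, whereas the paper delegates exactly this step to the general information-state result in \cite{subramanian2022approximate} together with Lemma \ref{lemma:POMDP}.
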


\begin{proof}
    Lemmas \ref{lemma:independent_updates} - \ref{lemma:next_pred} establish that $\Pi_t$ is sufficient to evaluate the expected cost, evolves in a state-like manner, and is sufficient to predict future observations for all $t \in \mathcal{T}$, hence it satisfies the standard conditions reported in \cite[Definition 3]{subramanian2022approximate} of an information state. As a direct consequence of the properties of information states \cite[Theorem 5]{subramanian2022approximate} and Lemma \ref{lemma:POMDP}, the recommendation strategy $\bar{\boldsymbol{g}}^{*{\text{ai}}} = \bar{g}^{\text{ai}}_{0:T}$ is an optimal solution to Problem \ref{problem_1}.
\end{proof}

Theorem \ref{theorem:optimal_recommendation} establishes that there is no loss of optimality when the AI platform holds beliefs $B_t^{\text{x}}$ and $B_t^{\text{s}}$ independent of each other  and utilizes them to compute optimal recommendations at each $t \in \mathcal{T}$. 
In practice, the AI platform can compute $B_t^{\text{x}}$ for all $t$ given the system dynamics in Problem \ref{problem_1}. However, in most applications, the platform will not have access to an exact model for human behavior to compute or update $B_t^{\text{s}}$. Thus, in the next subsection, we define the notion of an AHM that can either be designed heuristically or learned from data. We show that the AI can use an AHM in conjunction with $B_t^{\text{x}}$ to compute approximately optimal recommendations.

\begin{remark}
    In Problem \ref{problem_1}, if the system's state $X_t$ is perfectly observed, i.e., $Y_t = X_t$, by the AI platform we can use the same sequence of arguments as in Theorem \ref{theorem:optimal_recommendation} to prove that $(B_t^{\text{s}}, X_t)$ is an information state for Problem \ref{problem_1}.
\end{remark}

\vspace{-12pt}

\subsection{Approximate human model} \label{subsection:approximate_human}

In this subsection, we define the notion of an AHM that can be used by an AI platform instead of an exact human model. %an AI platform can utilize instead of an exact model for human behavior.

\begin{definition} \label{definition:ahm}
An \textit{approximate human model} consists of a Borel space $\hat{\mathcal{S}}$, an evolution equation $\hat{\sigma}_t: \mathcal{H}_t \to \hat{\mathcal{S}}$, and a probability mass function $\hat{\mu}: \hat{\mathcal{S}} \times \mathcal{U} \to \Delta(\mathcal{U})$, such that the approximate internal state $\hat{S}_t := \hat{\sigma}_t(H_t)$ satisfies for all $t \in \mathcal{T}$:

\textit{1) Evolution in a belief-like manner:} There exists a function $\hat{\psi}^{\text{s}}: \hat{\mathcal{S}} \times \mathcal{U} \times \mathcal{Y} \to \hat{\mathcal{S}}$ independent of the choice of recommendation strategy $\boldsymbol{g}^{\text{ai}}$, such that
\begin{gather} \label{eq:ahm_condition_1}
    \hat{S}_{t+1} = \hat{\psi}^{\text{s}}(\hat{S}_t, U_t^{\text{ai}}, Y_{t+1}).
\end{gather}

\textit{2) Approximate prediction of human actions:} 
For any realization $h_t \in \mathcal{H}_t$ and $u_t^{\text{ai}} \in \mathcal{U}$, the probability distribution induced by $\hat{\mu}$ is such that for some $\varepsilon > 0$:
\begin{gather} \label{eq:ahm_condition_2}
    \delta^{\text{TV}}\Big(\mathsf{P}^{g^{\text{h}}}(U_t^{\text{h}}~|~h_t, u_t^{\text{ai}}), \hat{\mu}(U_t^{\text{h}}~|~\hat{\sigma}_t(h_t), u_t^{\text{ai}})\Big) \leq \varepsilon,
\end{gather}
where $\delta^{\text{TV}}(\cdot, \cdot)$ is the total variation distance and $\mathsf{P}^{g^{\text{h}}}(\cdot)$ is the conditional probability distribution induced on $U_t^{\text{h}}$ by the human's choice of control law $g^{\text{h}}$.
\end{definition}

\begin{remark} \label{remark:total_variation}
    The total variation distance between any two probability mass functions $\mathsf{P}$ and $\mathsf{Q}$ on a finite set $\mathcal{A}$ is defined as $\delta^{\text{TV}}(\mathsf{P}, \mathsf{Q}) := \frac{1}{2}\sum_{a \in \mathcal{A}}|\mathsf{P}(a) - \mathsf{Q}(a)|$.
\end{remark}

\begin{remark}
    The AHM is directly inspired by the properties of the belief $B_t^{\text{s}}$ in Subsection \ref{subsection:known_internal_dynamics}. The first property imposes the structure in Lemma \ref{lemma:independent_updates} and the second property is essential to approximate the results of Lemmas \ref{lemma:expected_reward} - \ref{lemma:next_pred} later in Lemma \ref{lemma:approximate_properties}.
\end{remark}

\begin{remark}
    From Definition \ref{definition:ahm}, any empirically designed or learned model qualifies as an AHM if it satisfies the conditions \eqref{eq:ahm_condition_1} and \eqref{eq:ahm_condition_2}. Note that \eqref{eq:ahm_condition_1} is an intrinsic property of the AHM and \eqref{eq:ahm_condition_2} can be verified using an empirical distribution constructed from sampled observations of $U_t^{\text{h}}$ in the absence of the true underlying distribution $\mathsf{P}^{g^{\text{h}}}(U_t^{\text{h}}~|~h_t, u_t^{\text{ai}})$.
\end{remark}

Given an AHM, we define the random variable $\hat{\Pi}_t := (\hat{S}_t, B_t^{\text{x}})$ for all $t \in \mathcal{T}$.
Next, we prove that $\hat{\Pi}_t$ approximates the information state of the human-AI POMDP at each $t$, and it yields an approximately optimal recommendation strategy using the following DP decomposition. For all $t \in \mathcal{T}$, for all $\hat{\pi}_t \in \hat{\mathcal{S}} \times \Delta(\mathcal{X})$ and $u_t^{\text{ai}} \in \mathcal{U}$, we recursively define
    \begin{align}
        \hat{Q}_{t}(\hat{\pi}_t, u_t^{\text{ai}}) := \mathsf{E}[&r(X_t, U_t^{\text{h}}) + \gamma \hat{V}_{t+1}(\hat{\Pi}_{t+1})\,|\,\hat{\pi}_t, u_t^{\text{ai}}], \label{eq:approx_DP_1} \\
        \hat{V}_{t}(\hat{\pi}_t) := \min_{u_t^{\text{ai}} \in \mathcal{U}} &\hat{Q}_{t}(\hat{\pi}_t, u_t^{\text{ai}}), \label{eq:approx_DP_2}
    \end{align}
    where $\hat{V}_{T+1}(\hat{\pi}_{T+1}) := 0$ identically. Then, the corresponding recommendation law is $\hat{g}_t^{*\text{ai}}(\hat{\pi}_t) \hspace{-2pt} := \hspace{-2pt} \arg\min_{u_t^{\text{ai}}} \hat{Q}_{t}(\hat{\pi}_t, u_t^{\text{ai}})$ for all $t \in \mathcal{T}$. Next, we prove an essential property.

%is sufficient to approximately evaluate the expected reward and approximately predict the next observations at each $t \in \mathcal{T}$.

\begin{lemma} \label{lemma:approximate_properties}
    At any $t \in \mathcal{T}$, for any realizations $h_t \in \mathcal{H}_t$ and $u_t^{\text{ai}} \in \mathcal{U}$, the corresponding $\hat{\pi}_t \in \hat{\mathcal{S}} \times \Delta(\mathcal{X})$ satisfies:
    \begin{align}
        \hspace{-5pt} \text{\emph{a)} } &\big|\mathsf{E}^{g^{\text{\emph{h}}}}[r(X_t, U_t^{\text{\emph{h}}})\,|\,h_t, u_t^{\text{\emph{ai}}}] - \mathsf{E}^{\hat{\mu}}[r(X_t, U_t^{\text{\emph{h}}})\,|\,\hat{\pi}_t, u_t^{\text{\emph{ai}}}]\big| \nonumber \\ 
        & \hspace{150pt} \leq 2r^{\max} \cdot \varepsilon, \label{eq:approximate_reward} \\
        \hspace{-5pt} \text{\emph{b)} } &\delta^{\text{\emph{TV}}}\big(\mathsf{P}^{g^{\text{h}}}(Y_{t+1}, U_t^{\text{h}}|h_t, u_t^{\text{ai}}), \mathsf{P}^{\hat{\mu}}(Y_{t+1}, U_t^{\text{h}}|\hat{\pi}_t, u_t^{\text{ai}})\big) %\nonumber \\
        %& \hspace{185pt} 
        \leq \varepsilon. \label{eq:approximate_observation}
    \end{align}
\end{lemma}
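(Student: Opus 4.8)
The plan is to expand both the exact and the approximate quantities over the pair (system state, human action), use the conditional independence of $X_t$ and $S_t$ given the history (Lemma~\ref{lemma:independence}), and then observe that the only discrepancy between the exact model and the AHM is the conditional law of $U_t^{\text{h}}$, which \eqref{eq:ahm_condition_2} controls in total variation.

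For part~(a): since $\hat\Pi_t = (\hat S_t, B_t^{\text{x}})$ carries the genuine belief $B_t^{\text{x}}$ as its second component, the $X_t$-marginal is the same $b_t^{\text{x}}$ in both expansions. By Lemma~\ref{lemma:independence}, $X_t$ and $S_t$ --- hence $X_t$ and $U_t^{\text{h}} = g^{\text{h}}(S_t, u_t^{\text{ai}})$ --- are conditionally independent given $(h_t, u_t^{\text{ai}})$, so, exactly as in the proof of Lemma~\ref{lemma:expected_reward},
\[
\mathsf{E}^{g^{\text{h}}}[r(X_t, U_t^{\text{h}})\,|\,h_t, u_t^{\text{ai}}] = \sum_{x_t}\sum_{u_t^{\text{h}}} b_t^{\text{x}}(x_t)\,\mathsf{P}^{g^{\text{h}}}(u_t^{\text{h}}\,|\,h_t, u_t^{\text{ai}})\, r(x_t, u_t^{\text{h}}),
\]
and the same identity holds for the AHM with $\mathsf{P}^{g^{\text{h}}}(u_t^{\text{h}}\,|\,h_t, u_t^{\text{ai}})$ replaced by $\hat\mu(u_t^{\text{h}}\,|\,\hat\sigma_t(h_t), u_t^{\text{ai}})$. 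Subtracting, bounding $|r(x_t,u_t^{\text{h}})| \le r^{\max}$, and using $\sum_{x_t} b_t^{\text{x}}(x_t) = 1$ reduces the left-hand side of \eqref{eq:approximate_reward} to $r^{\max}\sum_{u_t^{\text{h}}}\big|\mathsf{P}^{g^{\text{h}}}(u_t^{\text{h}}\,|\,h_t, u_t^{\text{ai}}) - \hat\mu(u_t^{\text{h}}\,|\,\hat\sigma_t(h_t), u_t^{\text{ai}})\big| = 2r^{\max}\,\delta^{\text{TV}}(\cdot,\cdot) \le 2r^{\max}\varepsilon$, using Remark~\ref{remark:total_variation} and \eqref{eq:ahm_condition_2}.

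For part~(b): I would repeat the expansion from the proof of Lemma~\ref{lemma:next_pred} to obtain the factorization
\[
\mathsf{P}^{g^{\text{h}}}(y_{t+1}, u_t^{\text{h}}\,|\,h_t, u_t^{\text{ai}}) = \kappa(y_{t+1}\,|\,b_t^{\text{x}}, u_t^{\text{h}})\cdot\mathsf{P}^{g^{\text{h}}}(u_t^{\text{h}}\,|\,h_t, u_t^{\text{ai}}),
\]
where $\kappa(y_{t+1}\,|\,b_t^{\text{x}}, u_t^{\text{h}}) := \sum_{\tilde x_{t+1}, \tilde x_t} \mathsf{P}(y_{t+1}\,|\,\tilde x_{t+1})\,\mathsf{P}(\tilde x_{t+1}\,|\,\tilde x_t, u_t^{\text{h}})\,b_t^{\text{x}}(\tilde x_t)$. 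The kernel $\kappa$ depends on the human only through the realized action $u_t^{\text{h}}$ and on the state only through the common belief $b_t^{\text{x}}$, and is assembled from the true system dynamics $f,o$ that the AI uses verbatim; therefore the identical $\kappa$ appears in the analogous factorization of $\mathsf{P}^{\hat\mu}(y_{t+1}, u_t^{\text{h}}\,|\,\hat\pi_t, u_t^{\text{ai}})$, with $\mathsf{P}^{g^{\text{h}}}$ replaced by $\hat\mu$. Substituting both factorizations into the definition of $\delta^{\text{TV}}$, factoring $\big|\mathsf{P}^{g^{\text{h}}}(u_t^{\text{h}}\,|\,\cdot) - \hat\mu(u_t^{\text{h}}\,|\,\cdot)\big|$ out of the sum over $y_{t+1}$, and using $\sum_{y_{t+1}}\kappa(y_{t+1}\,|\,b_t^{\text{x}}, u_t^{\text{h}}) = 1$, the left-hand side of \eqref{eq:approximate_observation} collapses to $\delta^{\text{TV}}\big(\mathsf{P}^{g^{\text{h}}}(U_t^{\text{h}}\,|\,h_t, u_t^{\text{ai}}),\,\hat\mu(U_t^{\text{h}}\,|\,\hat\sigma_t(h_t), u_t^{\text{ai}})\big) \le \varepsilon$, again by \eqref{eq:ahm_condition_2}.

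Both computations are routine once the factorizations are written down, so the only point that needs care is the bookkeeping in those expansions: one must verify that the system-dynamics part of the joint law of $(Y_{t+1}, U_t^{\text{h}})$ (and of the reward) is genuinely the same under the exact model and the AHM --- i.e., that it factors through the realized action $u_t^{\text{h}}$ and the common belief $b_t^{\text{x}}$ --- which is precisely where Lemma~\ref{lemma:independence} is invoked. I do not expect any genuine obstacle beyond this.
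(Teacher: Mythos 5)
Your proposal is correct and follows essentially the same route as the paper's proof: part (a) expands both expectations over $(x_t, u_t^{\text{h}})$ using the common belief $b_t^{\text{x}}$ and the conditional independence from Lemma~\ref{lemma:independence}, then applies \eqref{eq:ahm_condition_2}; part (b) factors the joint law of $(Y_{t+1}, U_t^{\text{h}})$ into a model-independent observation kernel times the action distribution and again invokes \eqref{eq:ahm_condition_2}. If anything, your normalization bookkeeping ($\sum_{x_t} b_t^{\text{x}}(x_t)=1$ and $\sum_{y_{t+1}}\kappa(y_{t+1}\,|\,b_t^{\text{x}},u_t^{\text{h}})=1$) is stated more carefully than the paper's ``$\leq 1$'' shortcuts, but the argument is the same.
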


\begin{proof}
    At any $t \in \mathcal{T}$, for a given realization $h_t \in \mathcal{H}_t$ of the history, $\hat{\pi}_t = \big(\hat{\sigma}_t(h_t), b_t^{\text{x}}\big)$, where $b_t^{\text{x}} = \mathsf{P}(X_t|h_t)$. 
    
    a) To prove \eqref{eq:approximate_reward}, we expand the expected rewards under the distributions generated by $\mathsf{P}^{\text{g}^{\text{h}}}$ and $\hat{\mu}$, i.e.,
    $|\mathsf{E}^{g^{\text{h}}}[r(X_t, U_t^{\text{h}})~|~h_t, u_t^{\text{ai}}] - \mathsf{E}^{\hat{\mu}}[r(X_t, U_t^{\text{h}})~|~\hat{\sigma}_t(h_t), b_t^{\text{x}}, u_t^{\text{ai}}]| = |\sum_{\tilde{u}_t^{\text{h}}, \tilde{x}_t}r(\tilde{x}_t, \tilde{u}_t^{\text{h}}) \cdot b_t^{\text{x}}(\tilde{x}_t) \cdot \mathsf{P}^{g^{\text{h}}}(u_t^{\text{h}}~|~h_t, u_t^{\text{ai}})   - \sum_{\tilde{u}_t^{\text{h}}, \tilde{x}_t} r(\tilde{x}_t, \tilde{u}_t^{\text{h}}) \cdot b_t^{\text{x}}(\tilde{x}_t) \cdot \hat{\mu}(u_t^{\text{h}}~|~\hat{\sigma}(h_t), u_t^{\text{ai}}) | \leq 2 r^{\max} \cdot \varepsilon,$
    where, in the inequality, we use $b_t^{\text{x}}(\tilde{x}_t) = \mathsf{P}(\tilde{x}_t|h_t) \leq 1$ for all $t$, the definition of total variation distance in Remark \ref{remark:total_variation}, and the fact that $r^{\max}$ is an upper bound on the reward.

    b) To prove \eqref{eq:approximate_observation}, we first use the definition of the total variation distance and Bayes' law to write that
        $\delta^{\text{TV}}(\mathsf{P}^{g^{\text{h}}}(Y_{t+1}, U_t^{\text{h}}~|~h_t, u_t^{\text{ai}}), \mathsf{P}^{\hat{\mu}}(Y_{t+1}, U_t^{\text{h}}~|~\hat{\pi}_t, u_t^{\text{ai}})) \nonumber
        = \sum_{\tilde{y}_{t+1}, \tilde{u}_t^{\text{h}}} \frac{1}{2} |\mathsf{P}^{g^{\text{h}}}(\tilde{y}_{t+1}, \tilde{u}_t^{\text{h}}~|~h_t, u_t^{\text{ai}}) - \mathsf{P}^{\hat{\mu}}(\tilde{y}_{t+1}, \tilde{u}_t^{\text{h}}~|~\hat{\pi}_t, u_t^{\text{ai}})  | \nonumber
        = \sum_{\tilde{y}_{t+1}, \tilde{u}_t^{\text{h}}} \frac{1}{2} | \mathsf{P}^{g^{\text{h}}}(\tilde{y}_{t+1}~|~h_t, \tilde{u}_t^{\text{h}}) \cdot \mathsf{P}^{g^{\text{h}}}(\tilde{u}_t^{\text{h}}~|~h_t, u_t^{\text{ai}}) \nonumber 
        \quad - \mathsf{P}^{\hat{\mu}}(\tilde{y}_{t+1}~|~\hat{\pi}_t, \tilde{u}_t^{\text{h}}) \cdot \hat{\mu}(\tilde{u}_t^{\text{h}}~|~\hat{\sigma}_t(h_t), u_t^{\text{ai}})|$.
    Here, note that $\mathsf{P}^{g^{\text{h}}}(y_{t+1}~|~h_t, \tilde{u}_t^{\text{h}}) = \sum_{\tilde{x}_{t+1}, \tilde{x}_t} \mathsf{P}(\tilde{y}_{t+1}|\tilde{x}_{t+1}) \cdot \mathsf{P}(\tilde{x}_{t+1}|\tilde{x}_t, u_t^{\text{h}}) \cdot \mathsf{P}^{g^{\text{h}}}(\tilde{x}_t|h_t, u_t^{\text{h}}) = \sum_{\tilde{x_{t+1}}, \tilde{x}_t} \mathsf{P}(\tilde{y}_{t+1}|\tilde{x}_{t+1}) \cdot \mathsf{P}(\tilde{x}_{t+1}|\tilde{x}_t, u_t^{\text{h}}) \cdot b_t^{\text{x}}(\tilde{x}_t) = \mathsf{P}({y}_{t+1}|\hat{\pi}_t, u_t^{\text{h}}) = \mathsf{P}^{\hat{\mu}}({y}_{t+1}|\hat{\pi}_t, u_t^{\text{h}}),$ where, in the second equality we use Lemma \ref{lemma:independent_updates} to conclude that $b_t^{\text{x}}$ is independent of the choice of $g^{\text{h}}$; in the third equality, we use the fact that $\hat{\pi}_t$ contains $b_t^{\text{x}}$ as a component; and in the fourth equality, we use the same arguments to show that the probability is independent of the choice of $\hat{\mu}$. Substituting this result, we have that
    $\delta^{\text{TV}}\big(\mathsf{P}^{g^{\text{h}}}(Y_{t+1}, U_t^{\text{h}}~|~h_t, u_t^{\text{ai}}), \mathsf{P}^{\hat{\mu}}(Y_{t+1}, U_t^{\text{h}}~|~\hat{\pi}_t, u_t^{\text{ai}})\big) 
    \leq \hspace{-1pt} \frac{1}{2} \sum_{\tilde{y}_{t+1}, \tilde{u}_t^{\text{h}}} \hspace{-1pt} \mathsf{P}(\tilde{y}_{t+1}|\hat{\pi}_t, \tilde{u}_t^{\text{h}}) \cdot \big|\mathsf{P}^{g^{\text{h}}}(\tilde{u}_t^{\text{h}}~|~h_t, u_t^{\text{ai}}) -  \hat{\mu}(\tilde{u}_t^{\text{h}}~|~\hat{\sigma}_t(h_t),$ $ u_t^{\text{ai}})\big| 
    \leq \delta^{\text{TV}}\big(\mathsf{P}^{g^{\text{h}}}(U_t^{\text{h}}~|~h_t, u_t^{\text{ai}}), \hat{\mu}(U_t^{\text{h}}~|~\hat{\sigma}_t(h_t), u_t^{\text{ai}}) \big)$ $ \leq \varepsilon$, where in the second inequality we use Remark \ref{remark:total_variation} and note that $\mathsf{P}(\tilde{y}_{t+1}|\hat{\pi}_t, u_t^{\text{h}}) \hspace{-1pt} \leq \hspace{-1pt} 1$; and in the third inequality we use \eqref{eq:ahm_condition_2}. 
\end{proof}

Using Lemma \ref{lemma:approximate_properties}, we establish that the recommendation strategy $\hat{\boldsymbol{g}}_t^{*\text{\emph{ai}}} = \hat{g}_{0:t}^{*\text{\emph{ai}}}$ from \eqref{eq:approx_DP_1} - \eqref{eq:approx_DP_2} is approximately optimal.  

\begin{theorem} \label{theorem:approximation_garuantees}
    Let $||\hat{V}||_{\infty}$ be an upper bound on $\hat{V}_t(\hat{\pi}_t)$ for all $\hat{\pi}_t$ and $t \in \mathcal{T}$.
    Then, $\hat{\boldsymbol{g}}_t^{*\text{\emph{ai}}}$ is an approximately optimal recommendation strategy in Problem \ref{problem_1} with an optimality gap of at most $4 \varepsilon \cdot \big(r^{\max} + \sum_{t=1}^T \gamma^{t} \cdot (||\hat{V}||_{\infty} + r^{\max}) \big)$.
\end{theorem}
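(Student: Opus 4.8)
The plan is to bound the optimality gap by comparing the approximate value functions $\hat{V}_t$ computed from the AHM-based DP \eqref{eq:approx_DP_1}--\eqref{eq:approx_DP_2} against the true value functions $V_t$ of the history-based DP, and then—crucially—against the actual performance $J(\hat{\boldsymbol{g}}^{*\text{ai}})$ of the recommendation strategy $\hat{\boldsymbol{g}}^{*\text{ai}}$ when it is deployed in the real system. The key observation is that Lemma \ref{lemma:approximate_properties} gives us exactly the two ``approximate information state'' conditions (approximate reward prediction with error $2r^{\max}\varepsilon$, and approximate next-observation prediction in total variation with error $\varepsilon$) needed to run the standard approximation argument for POMDPs in the style of \cite{subramanian2022approximate}. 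First I would fix any realization $h_t$ and its induced $\hat{\pi}_t = (\hat{\sigma}_t(h_t), b_t^{\text{x}})$, and define for each $t$ the quantity $\alpha_t := \sup_{h_t} |V_t(h_t) - \hat{V}_t(\hat{\sigma}_t(h_t), b_t^{\text{x}})|$, where $V_t$ is the true optimal value function of the history-DP from Lemma \ref{lemma:POMDP}.

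The heart of the argument is a backward-induction bound on $\alpha_t$. Using the contraction property of the $\min$ operator ($|\min_a f(a) - \min_a g(a)| \le \max_a |f(a) - g(a)|$), it suffices to bound $|Q_t(h_t,u_t^{\text{ai}}) - \hat{Q}_t(\hat{\pi}_t,u_t^{\text{ai}})|$ for each $u_t^{\text{ai}}$. Expanding both $Q$-functions, this splits into (i) the difference of one-step expected rewards, controlled by \eqref{eq:approximate_reward} as $2r^{\max}\varepsilon$; and (ii) the difference of the discounted continuation terms $\gamma\,\mathsf{E}[V_{t+1}(H_{t+1})\mid h_t,u_t^{\text{ai}}] - \gamma\,\mathsf{E}^{\hat\mu}[\hat{V}_{t+1}(\hat\Pi_{t+1})\mid\hat\pi_t,u_t^{\text{ai}}]$. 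For (ii) I would insert the hybrid term $\gamma\,\mathsf{E}^{\hat\mu}[V_{t+1}(H_{t+1})'\mid\cdots]$ — more precisely, compare the expectation of $\hat{V}_{t+1}\circ(\hat\sigma_{t+1},\psi^{\text{x}})$ under the true next-observation law versus the approximate one, and separately bound the change-of-integrand term by $\gamma\alpha_{t+1}$. The change-of-measure term is bounded by $\gamma \cdot 2\|\hat{V}\|_\infty \cdot \delta^{\text{TV}}(\cdot,\cdot) \le 2\gamma\varepsilon\|\hat{V}\|_\infty$ using \eqref{eq:approximate_observation} and the standard inequality $|\mathsf{E}_P[\phi] - \mathsf{E}_Q[\phi]| \le 2\|\phi\|_\infty\,\delta^{\text{TV}}(P,Q)$. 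Note that since $\hat\psi^{\text{s}}$ and $\psi^{\text{x}}$ depend on $(U_t^{\text{h}},Y_{t+1})$ only through quantities predicted identically under both laws (the $b_t^{\text{x}}$-update is $g^{\text{h}}$- and $\hat\mu$-independent by Lemma \ref{lemma:independent_updates}, as already exploited inside the proof of Lemma \ref{lemma:approximate_properties}), the only discrepancy entering the continuation term is through the joint law of $(Y_{t+1},U_t^{\text{h}})$, which is exactly what \eqref{eq:approximate_observation} controls. This yields the recursion $\alpha_t \le 2r^{\max}\varepsilon + 2\gamma\varepsilon\|\hat{V}\|_\infty + \gamma\alpha_{t+1}$ with $\alpha_{T+1}=0$.

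Unrolling the recursion from $t=0$ gives $\alpha_0 \le 2r^{\max}\varepsilon\sum_{t=0}^{T}\gamma^t + 2\varepsilon\|\hat{V}\|_\infty\sum_{t=1}^{T}\gamma^t \le 2\varepsilon(r^{\max} + \sum_{t=1}^T\gamma^t(\|\hat{V}\|_\infty + r^{\max}))$, which bounds $|J(\boldsymbol{g}^{*\text{ai}}) - \hat{V}_0(\hat\pi_0)|$. To finish, I would run the same backward-induction argument a second time, but now bounding $\beta_t := \sup_{h_t}|J_t^{\hat{\boldsymbol{g}}^{*\text{ai}}}(h_t) - \hat{V}_t(\hat\pi_t)|$, where $J_t^{\hat{\boldsymbol{g}}^{*\text{ai}}}$ is the true cost-to-go of actually \emph{playing} $\hat{g}^{*\text{ai}}$; since at each step the same recommendation $u_t^{\text{ai}} = \hat{g}_t^{*\text{ai}}(\hat\pi_t)$ is used in both, the per-step reward and transition discrepancies are governed by the same bounds, giving the identical recursion and hence $\beta_0 \le 2\varepsilon(r^{\max} + \sum_{t=1}^T\gamma^t(\|\hat{V}\|_\infty + r^{\max}))$. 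Combining via the triangle inequality, $J(\boldsymbol{g}^{*\text{ai}}) - J(\hat{\boldsymbol{g}}^{*\text{ai}}) \le \alpha_0 + \beta_0 \le 4\varepsilon(r^{\max} + \sum_{t=1}^T\gamma^t(\|\hat{V}\|_\infty + r^{\max}))$, as claimed. The main obstacle is bookkeeping in step (ii): carefully decomposing the continuation-term difference into a change-of-integrand piece (yielding $\gamma\alpha_{t+1}$) and a change-of-measure piece (yielding the $2\gamma\varepsilon\|\hat{V}\|_\infty$ term) without double-counting, and verifying that the only measure discrepancy is through $(Y_{t+1},U_t^{\text{h}})$ — this is where the structural results of Lemmas \ref{lemma:independence}--\ref{lemma:next_pred} and the $g^{\text{h}}$-independence of the $b^{\text{x}}$-update do the real work.
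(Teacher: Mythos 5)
Your proof is correct and takes essentially the same route as the paper: the paper's own proof simply notes that Lemma \ref{lemma:approximate_properties} together with \eqref{eq:ahm_condition_1} and \eqref{eq:b_x_evol} makes $\hat{\Pi}_t$ an $(\epsilon,\delta)$-approximate information state with $\epsilon = 2r^{\max}\varepsilon$ and $\delta = \varepsilon$, and then invokes the performance bound of \cite[Theorem 3]{subramanian2019approximate}. Your backward-induction argument (the value-comparison recursion for $\alpha_t$ and the policy-evaluation recursion for $\beta_t$, combined by the triangle inequality, with the change-of-integrand/change-of-measure split) is exactly that cited theorem's proof carried out explicitly, and it correctly reproduces the stated gap $4\varepsilon\big(r^{\max} + \sum_{t=1}^{T}\gamma^{t}(||\hat{V}||_{\infty} + r^{\max})\big)$.
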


\begin{proof}
    Lemma \ref{lemma:approximate_properties} establishes that the random variable $\hat{\Pi}_t = (\hat{S}_t, B_t^{\text{x}})$ is sufficient to approximately evaluate the expected cost in \eqref{eq:approximate_reward} and is sufficient to approximately predict future observations in \eqref{eq:approximate_observation} for all $t \in \mathcal{T}$.  
    Furthermore, from \eqref{eq:ahm_condition_1} in Definition \ref{definition:ahm} and \eqref{eq:b_x_evol} in Lemma \ref{lemma:independent_updates}, we conclude that $\hat{\Pi}_t$ evolves in a state-like manner, hence it satisfies the conditions reported in \cite[Definition 2]{subramanian2019approximate} to qualify as an $(\epsilon, \delta)$-approximate information state for the human-AI POMDP, with $\epsilon = 2r^{\max}\cdot \varepsilon$ and $\delta = \varepsilon$. The result follows by substituting $\epsilon$ and $\delta$ into the performance bounds for approximate information states in \cite[Theorem 3]{subramanian2019approximate}.
\end{proof}

% \begin{remark}
%     The AHM as presented in Definition \ref{definition:ahm} can be  understood as a predictive state representation (PSR) for the human's future actions. While PSRs were introduced for POMDPs in \cite{littman2001predictive}, the AHM yields explicit performance bounds through the approximate information state framework.
% \end{remark}

\vspace{-12pt}

\subsection{Constructing an approximate human model} \label{subsection:learning_algorithm}

We use supervised learning to learn the AHM in Definition \ref{definition:ahm}. 
%The AHM in Definition \ref{definition:ahm} can be constructed either through a combination of empirical modeling and supervised learning, or using only supervised learning. In either case, 
We assume that we can access multiple trajectories $(Y_{t+1}, U_{t}^{\text{h}}, U_{t}^{\text{ai}} : t \in \mathcal{T})$ generated using an exploratory AI strategy. Then, we select two function approximators as follows: \textbf{(1) The encoder} is a recurrent neural network (e.g., LSTM or GRU) denoted by $\phi: \hat{\mathcal{S}} \times \mathcal{Y} \times \mathcal{U} \to \hat{\mathcal{S}}$ whose hidden state will be treated as $\hat{S}_t$ at each $t \in \mathcal{T}$. Thus, the inputs to $\phi$ are $(\hat{S}_{t-1}, Y_t, U_{t-1}^{\text{ai}})$ and its output is $\hat{S}_t$. \textbf{(2) The decoder} is a feed-forward neural network $\rho: \hat{\mathcal{S}} \times \mathcal{U} \to \Delta(\mathcal{U})$, whose inputs at each $t \in \mathcal{T}$ are $(\hat{S}_t, U_{t}^{\text{ai}})$ and whose output is the conditional distribution $\hat{\mu}$, represented conveniently as a vector in the probability simplex $\Delta(\mathcal{U})$. We also select a training loss $L = - \sum_{t=0}^B \log(\hat{\mu}_t(U_t^{\text{h}}))$, where $\hat{\mu}(U_t^{\text{h}})$ is the probability of the specific realization $U_t^{\text{h}}$ in the distribution $\hat{\mu}$. This loss function approximates the Kullback–Leibler divergence between the true distribution and $\hat{\mu}$, which forms an upper bound on the total variation distance in \eqref{eq:ahm_condition_2} by Pinker's inequality. Then, we have the following approaches to construct and train an AHM:

\textit{1) Combining empirical models with learning:} The main idea is to \textit{empirically select} an AHM space $\hat{\mathcal{S}}$ and evolution equation $\hat{\psi}^{\text{s}}$. The choice of $\hat{\mathcal{S}}$ is based on factors affecting human behavior within a specific application. For example, consider the partial adherence model \cite{grand2022best,faros2023adherence}, where $\hat{S}_t$ is the human's adherence level at each $t$, or the opinion aggregation model \cite{balakrishnan2022improving}, where $\hat{S}_t$ is the human's self-confidence at each $t$.
Similarly, the choice of $\hat{\psi}^{\text{s}}$ is to ensure that $\hat{S}_{t+1} = \hat{\psi}^{\text{s}}(\hat{S}_{t}, U_t^{\text{ai}}, Y_{t+1})$ for all $t \in \mathcal{T}$. To learn our model, we feed $\hat{S}_t$ from the empirical model and $U_t^{\text{ai}}$ to the decoder $\rho$ at each $t$ and train $\rho$ over the trajectories with loss $L$.

\textit{2) Using only supervised learning:} When we cannot use domain knowledge, we learn an AHM from data by assuming an encoder-decoder architecture. We consider the encoder $\phi$ and feed its internal state $\hat{S}_t$ with $U_t^{\text{ai}}$ to the decoder $\rho$ at each $t \in \mathcal{T}$. We train the complete network assembly with loss $L$.

\vspace{-6pt}
\section{Numerical Example} \label{sec:example}

\begin{figure}[t!]
    \centering
    \includegraphics[width=\linewidth]{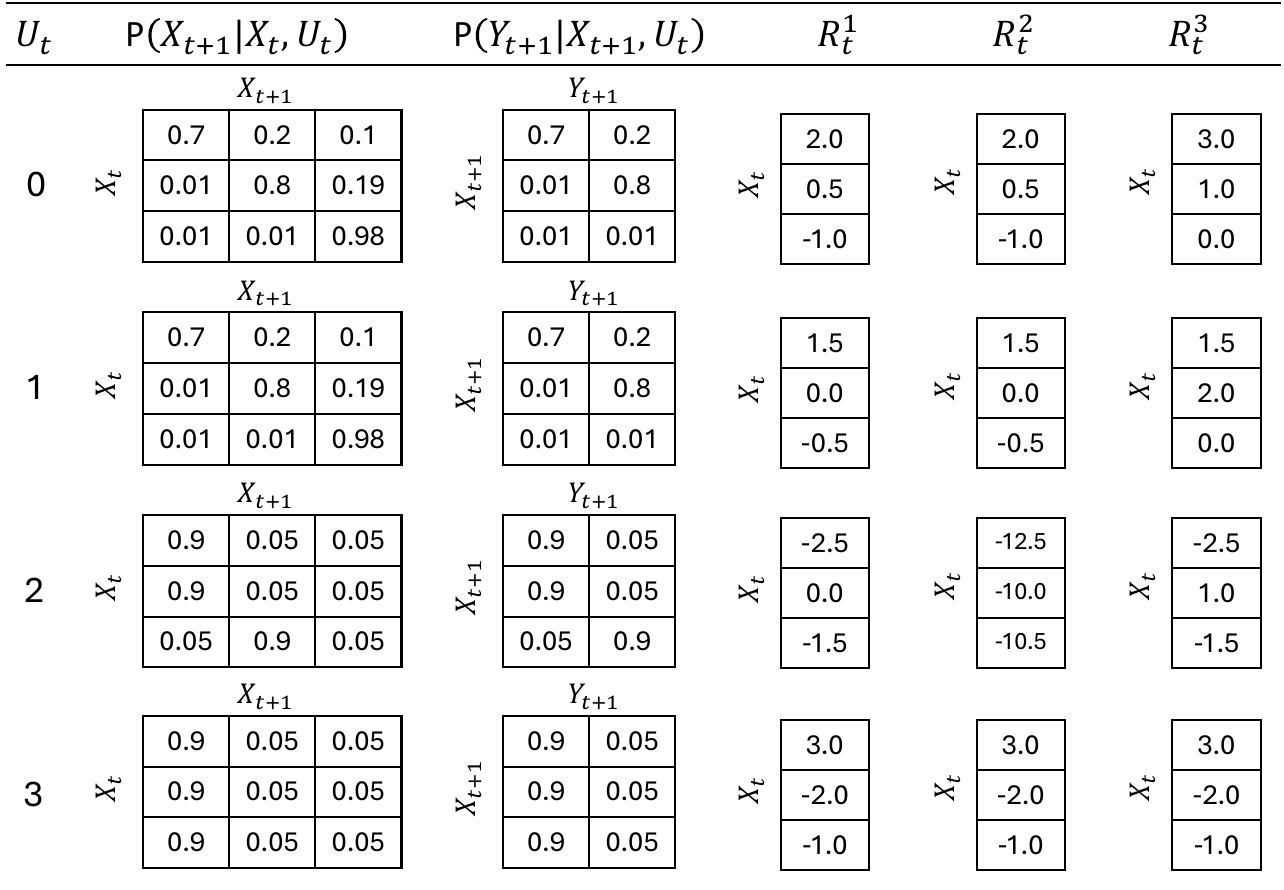}
    \caption{System model for the machine.}
    \label{fig:machine_model}
    \vspace{-14pt}
\end{figure}

In this section, we illustrate our results with a simple example.
We consider a partially observed machine replacement problem with a human operator who receives suggestions from an AI platform. The machine's state $X_t = \{0,1,2\}$ represents the number of failures at each $t \in \mathcal{T}$. The possible actions are $\mathcal{U} = \{0,1,2,3\}$, where $0$ is produce, $1$ is inspect, $2$ is small repair, and $3$ is major repair. At each $t \in \mathcal{T}$, the machine's state evolves using the transition probabilities in Fig. \ref{fig:machine_model}. The human-AI team receive an observation $Y_t \in \{0,1\}$ representing the quality of the machine output at each $t$ using the probabilities in Fig. \ref{fig:machine_model}. 
We consider a lazy human operator, whose internal state $S_t \in \{0,1\}$ denotes their motivation at any $t \in \mathcal{T}$. If $S_t = 1$ and $U_t^{\text{ai}} \in \{0,1,3\}$, the operator selects $U_t^{\text{h}} = U_t^{\text{ai}}$ with probability $0.97$ and selects any other action probability of $0.01$ each. However, if $S_t = 1$ and $U_t^{\text{ai}} = 2$, the lazy operator does not carry out minor repairs and instead decides to produce, i.e., $U_t^{\text{h}} = 0$.
Furthermore, if $U_t^{\text{ai}} = 3$, the operator does carry out major repairs but loses motivation, i.e., $U_t^{\text{h}} = 3$ and $S_{t+1} = 0$. In contrast, when $S_t = 0$, the operator almost always produces, i.e., $U_t^{\text{h}} = 0$ with probability $0.99$ and follows $U_t^{\text{h}} = U_t^{\text{ai}}$ with probability $0.01$. Furthermore, the operator recovers motivation after one time step, i.e., if $S_t = 0$ then $S_{t+1} = 1$. 
To incorporate interactions with the operator, we consider $3$ reward functions in Fig. \ref{fig:machine_model}. A natural reward is $R_t^1$, whereas $R_t^2$ discourages recommendation of $U_t^{\text{ai}} = 2$ and $R_t^3$ discourages $U_t^{\text{ai}} \in \{2,3\}$ and encourages $U_t^{\text{ai}} \in \{0,1\}$.

\begin{figure}[t!]
    \centering
    \includegraphics[width=\linewidth]{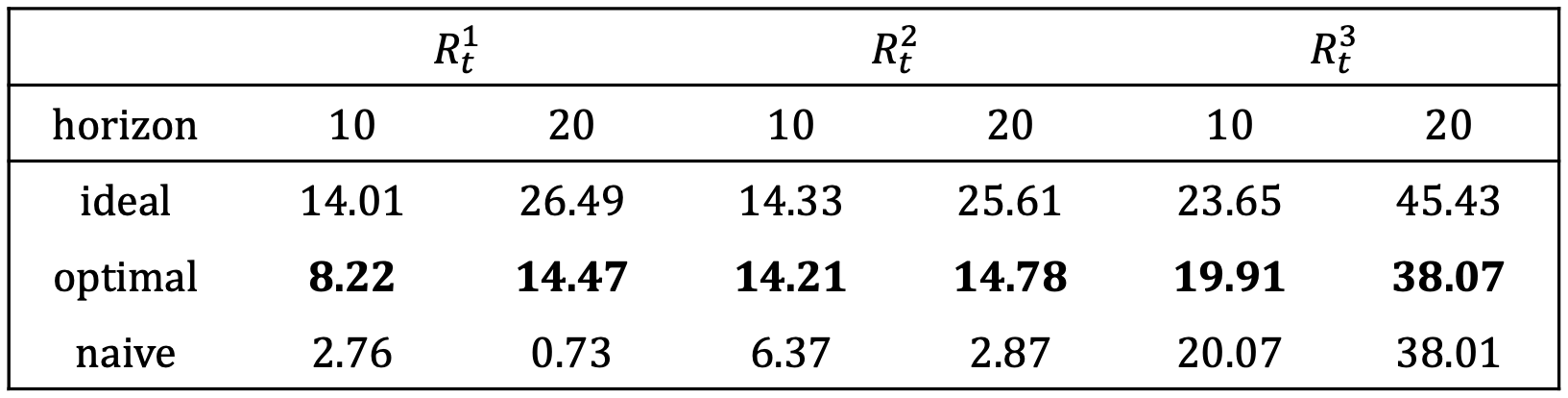}
    \caption{Rewards obtained using different strategies.}
    \label{fig:results}
    \vspace{-14pt}
\end{figure}

We construct an AHM using the first approach in Subsection \ref{subsection:learning_algorithm} and assuming $\hat{S}_t = (Y_t, A_{t-1}, A_{t-2})$, where $A_{t} = \mathsf{I}(U_{t}^{\text{h}} = U_t^{\text{ai}}) \in \{0,1\}$ indicates the adherence of the human to AI recommendations and $Y_t \in \{0,1\}$ is one-hot encoded. Note that $\hat{S}_t$ naturally satisfies \eqref{eq:ahm_condition_1}. The decoder $\rho$ has $4$ linear layers of sizes $(4,6)(6,8)(8,6)(6,4)$, where the first three layers have ReLU activation and the final layer has Sigmoid activation. We train decoder $\rho$ over $10,000$ trajectories with $T=50$ and a learning rate $0.0001$. Then, with discount $\gamma = 0.95$, we use the trained model to create the human-AI POMDP and compute the optimal recommendation strategy $\boldsymbol{g}^{*\text{ai}}$ using SARSOP \cite{kurniawati2008sarsop}. As a baseline, we also compute a naive AI strategy $\boldsymbol{g}^{*{\text{naive}}}$ \textit{without} considering a human in the loop using SARSOP. Our results are obtained by running $100$ simulations for time horizons $T=10$ and $T=20$ in three situations: \textbf{(1) ideal:} when $\boldsymbol{g}^{*{\text{naive}}}$ is implemented in a system without a human in the loop; \textbf{(2) optimal:} when $\boldsymbol{g}^{*{\text{ai}}}$ is implemented with a human; and \textbf{(3) naive:} when $\boldsymbol{g}^{*{\text{naive}}}$ is implemented with a human. We plot the actual rewards in Fig. \ref{fig:results}. The ideal case outperforms the others, indicating that the presence of a human may degrade performance. However, for both $R_t^1$ and $R_t^2$, the optimal case outperforms the naive case significantly, highlighting the utility of the learned AHM. In $R_t^3$, our rewards discourage $U_t^{\text{ai}} \in \{2,3\}$, and thus, both ideal and naive cases perform almost equally. Thus, the naive strategy and optimal strategy perform almost equally well. In $R_t^3$, for $T=10$, the errors within the learned AHM can explain the slight overperformance of the naive strategy over optimal.

\vspace{-6pt}

\section{Concluding Remarks} \label{sec:conclusion}

In this letter, we developed a framework for CPHS with partially observed data. We established the structural form of optimal recommendations and provided an AHM that can facilitate approximately optimal recommendations. Finally, we presented an approach to constructing AHMs from data and illustrated its utility in a numerical example. Future work should consider applying this framework to specific CPHS applications.

\vspace{-6pt}

\bibliographystyle{IEEEtran}
\bibliography{References, IDS}

\end{document}